\title{Computing optimal k-regret minimizing sets with top-k depth contours}
\author{
\alignauthor Sean Chester, Alex Thomo, S. Venkatesh, and Sue Whitesides\\
       \affaddr{Computer Science Department}\\
       \affaddr{University of Victoria}\\
       \affaddr{PO Box 1700 STN CSC}\\
       \affaddr{Victoria, Canada}\\
       \email{\{schester, sue\}@uvic.ca, \{thomo, venkat\}@cs.uvic.ca}
}
\date{\today}
\definecolor{grayish}{gray}{0.5}
\newtheorem{theorem}{Theorem}
\newtheorem{lemma}{Lemma}[section]
\newtheorem{corollary}[lemma]{Corollary}
\newtheorem{definition}{Definition}[section]
\newcommand{\ea}{~et~al.}
\newcommand{\nanon}{Nanongkai\ea}
\newcommand{\ds}{D} 
\renewcommand{\ss}{S} 
\newcommand{\famf}{\mathcal{F}} 
\newcommand{\uf}{f} 
\newcommand{\point}{p} 
\newcommand{\gain}{gain} 
\newcommand{\regret}{r_{\ds}} 
\newcommand{\rratio}{rr_{\ds}} 
\newcommand{\regretsize}{c} 
\newcommand{\kgain}{k\gain}
\newcommand{\kregret}{k\regret}
\newcommand{\krratio}{k\rratio}
\newcommand{\supremum}{\mathrm{sup}} 
\newcommand{\mins}{\mathcal{S}} 
\newcommand{\faml}{\mathcal{L}^+} 
\newcommand{\famUnit}{\mathcal{U}^+} 
\newcommand{\env}{\mathcal{E}} 
\newcommand{\cchain}{\mathcal{C}} 
\newcommand{\ls}{\mathcal{L}} 
\newcommand{\as}{\mathcal{P}} 
\newcommand{\queue}{\mathcal{Q}} 
\newcommand{\pq}{\queue} 
\newcommand{\disttock}{\Delta} 
\newcommand{\dd}{top-$k$ rank depth}
\newcommand{\origin}{O}
\newcommand{\contour}{\mathcal{C}} 
\newcommand{\NP}{\textsl{NP}}
\newcommand{\NPH}{{\NP}-Hard}
\renewcommand{\max}{\mathrm{max}}
\newcommand{\order}{\mathcal{O}}
\begin{document}
\maketitle

\begin{abstract}
Regret minimizing sets are a very recent approach to representing a dataset $\ds$ with a small subset $\ss$ of representative tuples.  The set $\ss$ is chosen such that executing any top-$1$ query on $\ss$ rather than $\ds$ is minimally perceptible to any user.
To discover an optimal regret minimizing set of a predetermined cardinality is conjectured to be a hard problem.  In this paper, we generalize the problem to that of finding an optimal $k$-regret minimizing set, wherein the difference is computed over top-$k$ queries, rather than top-$1$ queries.

We adapt known geometric ideas of top-$k$ depth contours and the reverse top-$k$ problem.  We show that the depth contours themselves offer a means of comparing the optimality of regret minimizing sets with $L_2$ distance.  We design an $\order({\regretsize}n^2)$ plane sweep algorithm for two dimensions to compute an optimal regret minimizing set of cardinality $\regretsize$.  For higher dimensions, we introduce a greedy algorithm that progresses towards increasingly optimal solutions by exploiting the transitivity of $L_2$ distance.

\end{abstract}

\category{H.3.3}{Information Storage and Retrieval}{Information Search and Retrieval}
\category{F.2.2}{Analysis of Algorithms and Problem Complexity}{Nonnumerical Algorithms and Problems}[geometrical problems and computations]

\terms{Algorithms, Theory}

\keywords{regret, representative databases, top-k, arrangement of lines, plane sweep}

\section{Introduction}\label{sec:introduction}

\begin{table}[bth]
 \centering
  \begin{tabular}{c l | c c c c }
id&player name&points&rebs&steals&fouls\\\hline
1&Kevin Durant		&{\bf 2472}	&623		 &112		 &171\\
2&LeBron James		&2258		&554		 &125		 &119\\
3&Dwyane Wade		&2045		&373		 &{\bf 142}	 &181\\
4&Dirk Nowitzki		&2027		&520		 &70		 &208\\
5&Kobe Bryant		&1970		&391		 &113 		 &187\\
6&Carmelo Anthony	&1943		&454		 &88  		 &225\\
7&Amare Stoudemire	&1896		&732	 	 &52  		 &{\bf 281}\\
8&Zach Randolph		&1681		&{\bf 950}	 &80  		 &226\\
\hline
  \end{tabular}
  \caption{\label{fig:nba}Statistics for the top eight NBA point scorers from the 2009 regular season, taken from \url{databasebasketball.com}.  The top score in each statistic is bolded.}
\end{table}

For a user navigating a large dataset, the availability of a succinct representative subset of the data is crucial.  For example, consider Table~\ref{fig:nba}, a toy, but real, dataset consisting of the top eight scoring NBA players from the $2009$ basketball season.  A user viewing this data would typically be curious which of these eight players were ``top of the class'' that season.  That is, he is curious which few tuples best represent the entire dataset, without his having to peruse it in entirety.

A well-established approach to representing a dataset is with the {\em skyline} operator which returns all pareto-optimal points.\footnote{Pareto-optimal points are those for which no other point is higher ranked with respect to every attribute.}  The intention of the skyline operator is to reduce the dataset down to only those tuples that are guaranteed to best suit the preferences or interests of {\em somebody}.  If the toy dataset in Table~\ref{fig:nba} consisted only of the attributes \textit{points} and \textit{rebounds}, then the skyline would consist only of the players \textit{Kevin Durant}, \textit{Amare Stoudemire}, and \textit{Zach Randolph}, so these three players would represent well what are the most impressive combinations of point-scoring and rebounding statistics.  The skyline is a powerful summary operator only on low dimensional datasets, however; even on this toy example, {\em everybody} is in the skyline if we consider all four attributes.  In general, there is no guarantee that the skyline is an especially succinct representation of the dataset.

\subsubsection*{Regret}
A promising new alternative is the {\em regret minimizing set}, introduced by \nanon~\cite{regretMin}, which hybridizes the skyline operator with top-$k$ queries.  A top-$k$ query takes as input a utility function $\uf$ and evaluates each tuple according to $\uf$, reporting the $k$ tuples with highest values.  Figure~\ref{fig:ranking_nba} shows how highly three of the points rank for a user utility function of $f(\mathrm{pts},\mathrm{rebs})=(\mathrm{pts}+\mathrm{rebs})/2$, if the attributes are normalized.  The distance from the orthogonal line of each point is proportional to the point's score for that user weight.  This reveals that Randolph earns the highest normalized score ($0.840$), compared to Kevin Durant ($0.828$) and then Kobe Bryant ($0.604$).  

To evaluate whether a subset effectively represents the entire dataset well, \nanon\ introduce {\em regret ratio} as the ratio of how far from the best score in the dataset is the best score in that subset, with respect to a given utility function.  Graphically, this is proportional to how much smaller than the largest arrow is the largest arrow in the subset.  For the subset $\{\mathrm{Bryant}, \mathrm{Durant}\}$, the regret ratio is:
$$(0.840-0.828)/0.840=0.0143,$$
since the score for Randolph is the best in the dataset at 0.840, and the score for Durant is the best in the subset at 0.828.

\begin{figure}[tbh]
 \centering
 \includegraphics[scale=.35, clip=true, trim=0 14.5cm 0 0]{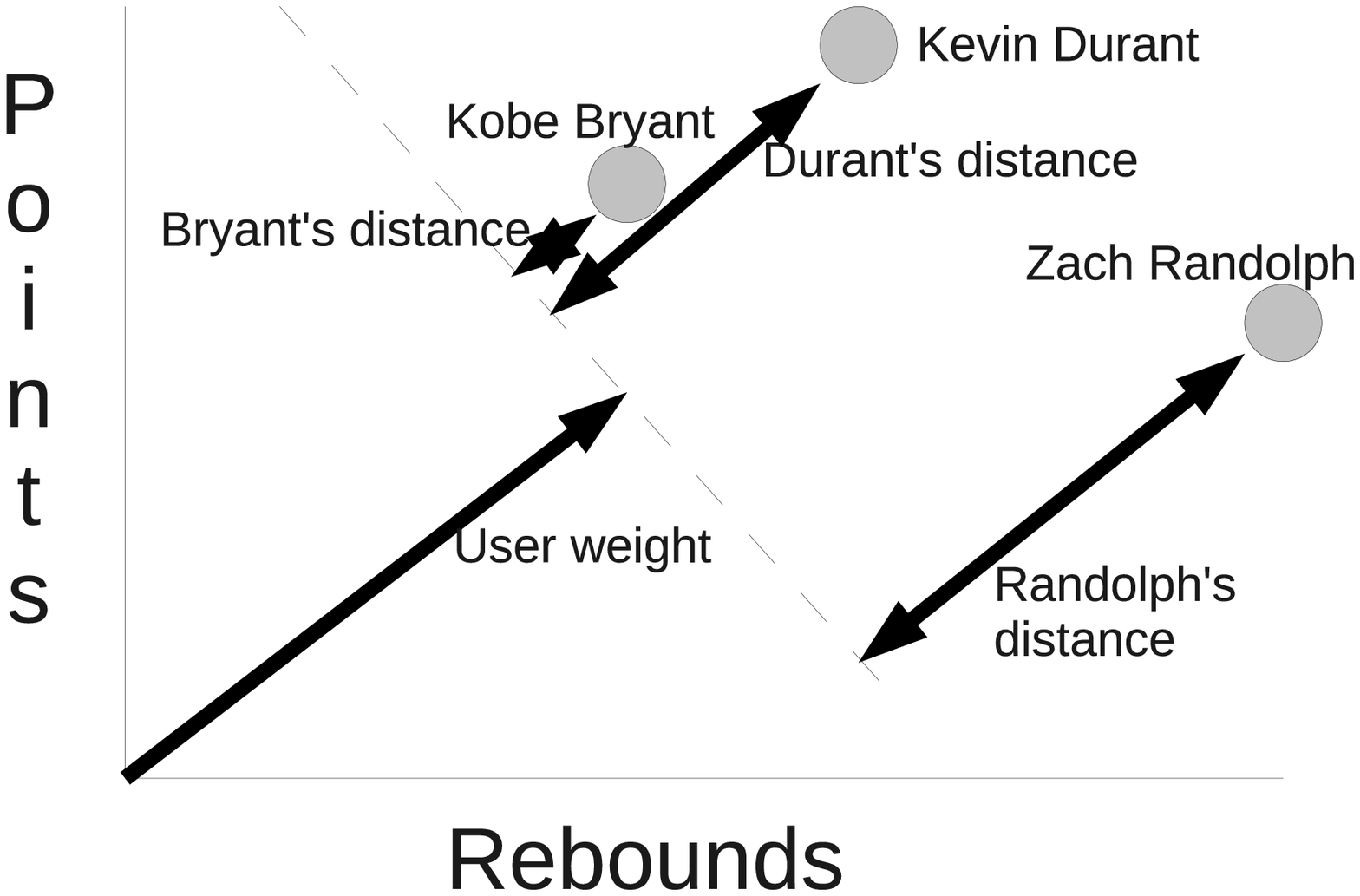}
 \caption{\label{fig:ranking_nba} A utility function $\uf(\mathrm{pts},\mathrm{rebs})=(\mathrm{pts}+\mathrm{rebs})/2$ represented as a vector $\vec{\uf}=\left<.5,.5\right>$, and three data points from Table~\ref{fig:nba} shown with their scores being proportional to the distance from the line orthogonal to $\vec{\uf}$.}
\end{figure}

Motivated to derive a succinct representation of a dataset, one with fixed cardinality, \nanon\ introduce {\em regret minimizing sets}~\cite{regretMin}, posing the question, 
``Does there exist one set of $\regretsize$ tuples that makes every user at least x\% happy?''  A regret minimizing set is a subset of a dataset that minimizes the regret ratio.  

A linear top-$k$ query can be considered as a problem of projection~\cite{vecProj}, where each tuple is regarded as a vector, as is the utility function.  The score of a tuple is proportional to the size of its projection onto the utility function vector and its scores for every possible utility function trace a (hyper-)sphere emanating from the point.  Minimizing regret ratio is equivalent to finding a subset that minimizes the maximum distance between the ``best'' spheres in the subset and the ``best'' spheres in the entire dataset, as illustrated in Figure~\ref{fig:regret_nba}.  Of the eight basketball players of Table~\ref{fig:nba}, \textit{Zach Randolph} achieves this criteria, so he is the optimal regret minimizing set of order (i.e., size) $1$.

\begin{figure}[t]
 \centering
 \includegraphics[scale=.35, clip=true, trim=0 16.5cm 7.5cm 0]{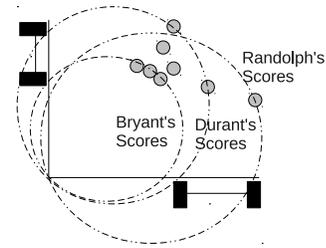}
 \caption{\label{fig:regret_nba}The spheres of scores for Durant, Bryant, and Randolph for {\em rebounds} as the $x$-attribute and {\em points} as the $y$-attribute.  The regret ratio of a subset $\ss$ is (roughly) the ratio of the distance from the best possible score in $\ss$ from the best possible score on the entire dataset.  For Durant and Bryant, this is maximized on the $x$-axis (where the best score is instead Randolph), and for Randolph, conversely, this is maximized on the $y$-axis (where the best score is instead Durant).}
\end{figure}

Randolph, the optimal regret minimizing tuple, however, is a peculiar choice to represent the dataset of Table~\ref{fig:nba} since he is the worst rated with respect to points.  This exposes a weakness of regret minimizing sets: they are based on assuming that a ``happy'' user is one who obtains their absolute top choice.  However, for an analyst curious to know what is a high point-scoring player, is he really dissatisfied with LeBron James as a query response rather than Kevin Durant?  

To change the scenario a bit, consider a dataset of hotels and a user searching for one that suits his preferences.  The absolute top theoretical choice may not suit him especially well at all.  It could be fully booked. Or, he might know that the manager reminds him of his ex-wife.  Regardless, it makes sense to present him a few, say $k$, options, any of with which he would be happy.  

We generalize the concept of regret and of regret minimizing sets to that of $k$-regret, analogous to the difference between top-$k$ queries and top-$1$ queries, because top-$k$ is often a better threshold for ``happiness''.  The analogous problem is to find a subset $\ss$ of points in the dataset that minimize the distance from the best point in $\ss$ to the $k$'th best point in the entire dataset.  This relaxation prevents having to fit an outlier tuple like Randolph.


\subsubsection*{Optimality}
A fundamental open question remains with regards to both the problem introduced by \nanon\ and our generalisation of it.  How can one efficiently compute the optimal $k$-regret minimizing set of a predetermined cardinality $\regretsize$, the subset that achieves the minimal regret ratio of all size $\regretsize$ subsets of the dataset?  This is a problem conjectured to be \NPH\ by \nanon\ for $k=1$: it involves searching for the best among $\order(n^c)$ different subsets.

We introduce algorithms that strive to compute optimal $k$-regret minimizing sets.  
Towards this end, we relate the recent work on top-$k$ depth contours of Chester\ea~\cite{mrtop} for the reverse top-$k$ problem of Vlachou\ea~\cite{rtopVlachou}.  The top-$k$ depth contours are a dual space, geometric idea that succinctly represent exactly the $k$'th ranked tuple for all utility functions.  We demonstrate that these ideas are directly applicable to discovering optimal $k$-regret minimizing sets.  For instance, if the cardinality restriction is lifted, then the contours are {\em precisely} the optimal solutions (Lemma~\ref{thm:contour_optimal}).  In the presence of cardinality restrictions, the contours still aid in finding optimal solutions (Theorem~\ref{thm:alt_form}).  


\subsection{Contributions and Outline}\label{sec:contributions}
In this paper we propose the first algorithms for computing optimal $k$-regret minimizing sets.  In particular, we:
\begin{itemize}
 \item{generalize {\em regret} and {\em regret minimizing sets}, top-$1$ concepts, to those of {\em $k$-regret} and {\em $k$-regret minimizing sets}, top-$k$ concepts (Section~\ref{sec:prelims});}
 \item{identify that apparently unrelated work on top-$k$ depth contours and reverse top-$k$ queries~\cite{mrtop} sheds insight into the problem of identifying {\em optimal} $k$-regret minimizing sets (Section~\ref{sec:technique});}
 \item{introduce an $\order(n^2\regretsize)$ algorithm to compute the optimal size-$\regretsize$ $k$-regret minimizing subset $\ss$ of a two-dimensional dataset $\ds$ for the family of positive linear functions $\faml$, despite a conjecture by \nanon~\cite{regretMin} that the general dimension problem is intractible for $k=1$ (Section~\ref{sec:algo2d});}
 \item{introduce a greedy algorithm for general dimensions that leverages the relationship between top-$k$ depth contours and optimality in order to progress towards more optimal solutions (Section~\ref{sec:allD}); and}
 \item{relate our work within the context of other literature (Section~\ref{sec:lit}).}
\end{itemize}

\section{Preliminaries}\label{sec:prelims}
In the following sections, we will demonstrate how to compute optimal $k$-regret minimizing sets by equating the problem to one in dual space.  Within the dual space, the optimal solution is the top-$k$ depth contour if it small enough, or else the convex chain through the arrangement of lines in dual space that minimizes a particular distance ratio.  Before embarking on these objectives, however, we introduce some concepts formally in four subsections: first, $k$-regret and $k$-regret minimizing sets (Section~\ref{sec:regret_defs}); next, the transformation of the data into an arrangement of lines in dual space and some of the tools therein that we use (Section~\ref{sec:arrangement_defs}); penultimately, the top-$k$ depth contours that exist in the arrangement of lines and are fundamentally connected to finding optimal $k$-regret minimizing sets (Section~\ref{sec:contour_defs}); and, finally, the problem definition under study (Section~\ref{sec:prob_defs}).  Throughout the paper, we consider the family of positive linear functions $\faml$, which, without loss of generality, can be reduced to the family of positive unit linear functions $\famUnit$~\cite{vecProj}.  Nonetheless, we present the definitions for general family.

\subsection{k-Regret}\label{sec:regret_defs}
{\em $k$-Regret}, introduced here, is a generalisation of {\em regret}, introduced by \nanon~\cite{regretMin}.  We recall the definitions from that paper and introduce the generalisation in this subsection.

Given a dataset $\ds$ of $n$ $d$-dimensional numeric tuples, a subset $\ss\subseteq\ds$, a family of utility functions $\famf$, and a utility function $\uf\in\famf$:

\begin{definition}[gain~\cite{regretMin}]\label{def:gain}
The {\em gain} for a subset $\ss\subseteq\ds$ on $\uf\in\famf$ is:
$$\gain(\ss,\uf) = \max_{\point\in\ss}\uf(\point).$$
\end{definition}

That is to say, the {\em gain} of a subset $\ss$, given a utility function $\uf$, is simply the highest score achievable in $\ss$ for the function $\uf$.  Recalling the example of Table~\ref{fig:nba} and the utility function $f(\mathrm{pts},\mathrm{rebs})=(\mathrm{pts}+\mathrm{rebs})/2$, and assuming the data is normalized, the gain of $\{\mathrm{Bryant},\mathrm{Durant},\mathrm{Wade}\}$ is $0.828$.  The generalisation of {\em gain} is to {\em $k$-gain}:

\begin{definition}[$k$-gain]\label{def:jgain}
Consider a descending order list of $\uf(p)$ for all $p\in\ss\subseteq\ds$, given $\uf\in\famf$.  
Then, the {\em $k$-gain} of $\ss$ on $\uf$ is simply the $k$'th value in the list.
\end{definition}

In other words, the {\em $k$-gain} for a subset $\ss\subseteq\ds$ is the $k$'th best score achieved by a point in $\ss$ on the utility function $\uf$.  For the subset $\ss=\{\mathrm{Bryant},\mathrm{Durant},\mathrm{Wade}\}$ and the same function $\uf$, the $2$-gain is the second best score, $0.748$, the score for {\em Durant}.  For $k=1$, this definition is equivalent to Definition~\ref{def:gain}.  

{\em Regret ratio}, then, is a reflection of how well the gain of a subset approaches that of the entire dataset.

\begin{definition}[regret and regret ratio~\cite{regretMin}]\label{def:regret}
The {\em regret} for a subset $\ss\subseteq\ds$ on $\uf\in\famf$ is: $$\regret(\ss,\uf)=\gain(\ds,\uf)-\gain(\ss,\uf).$$
The {\em regret ratio} is: $$\rratio(\ss,\uf)=\frac{\regret(\ss,\uf)}{\gain(\ds,\uf)}.$$
\end{definition}

Since the best score for $\uf$ is $0.840$ the regret for the running example $\ss$ is $(0.840-0.828)$ and the regret ratio is $(0.840-0.828)/0.840$.  We generalise this to $k$-regret by evaluating how well the gain of a subset approaches the {\em $k$-gain} of the entire dataset.  Note, again, that this reduces to Definition~\ref{def:regret} if $k=1$.

\begin{definition}[$k$-regret and $k$-regret ratio]\label{def:kregret}
The {\em $k$-regret} is: $$\kregret(\ss,\uf)=\mathrm{max}(\kgain(\ds,\uf)-\gain(\ss,\uf), 0).$$
The {\em $k$-regret ratio} is: $$\krratio(\ss,\uf)=\frac{\kregret(\ss,\uf)}{\kgain(\ds,\uf)}.$$
\end{definition}

Since {\em Durant} is the second highest scoring tuple in $\ds$ for $\uf$, the $2$-regret ratio of $\ss=\{\mathrm{Bryant},\mathrm{Durant},\mathrm{Wade}\}$ is $(0.828-0.828)/0.828=0$.  The subset $\ss$ perfectly matches the top-$2$ requirement for utility function $\uf$.  Finally,

\begin{definition}[maximum $k$-regret ratio]\label{def:maxRegretRatio}
The {\em maximum $k$-regret ratio} for a subset $\ss\subseteq\ds$ with respect to a {\em family} of utility functions $\famf$ is: $$\krratio(\ss,\famf)=\supremum_{\uf\in\famf}\krratio(\ss,\uf).$$  
\end{definition}

The {\em maximum $k$-regret ratio} is the largest observable $k$-regret ratio for any utility function in an entire family.  For $\ss$ the $1$-regret ratio is maximized for $g(\mathrm{pts},\mathrm{rebs})=\mathrm{rebs}$, at which the best score obtainable is $\ss$ is $0.655$ and the $1$-regret ratio is $(1.000-0.655)/1.000$ and the $2$-regret ratio is $(0.771-0.655)/0.771$. 

Finally, a $k$-regret minimizing set of order $\regretsize$ is simply one with cardinality $\regretsize$ that minimizes the maximum $k$-regret ratio.  There exist {\em optimal} $k$-regret minimizing sets of order $\regretsize$, which are those that achieve minimal maximum $k$-regret ratio of all subsets of size $\regretsize$.

\begin{definition}[optimal $k$-regret minimizing set]\label{def:regretMin}
An {\em optimal $k$-regret minimizing set} of order $\regretsize$ on a dataset $\ds$ given a family of utility functions $\famf$ is: $$\mins_{\regretsize}(\ds,\famf)=\mathrm{argmin}_{\ss\subseteq\ds, |\ss|\leq \regretsize}\krratio(\ss,\famf).$$ 
\end{definition}

As well, Definition~\ref{def:regretMin} reduces to that of \nanon~\cite{regretMin} if $k=1$.

\subsection{Arrangements of Lines}\label{sec:arrangement_defs}
The algorithms that we propose in this paper are geometric in nature and operate on arrangements of hyperplanes in dual space.  Arrangements of hyperplanes (or lines, in two dimensions), are well studied in Computational Geometry and are induced by the intersections of a set of hyperplanes.

\begin{definition}[arrangement]\label{def:arrangement}
An {\em arrangement} of a set of $d$-dimensional hyperplanes $\mathcal{H}$, denoted $\mathcal{A}_{\mathcal{H}}$, is a partitioning of $\mathbb{R}^d$ into cells, edges, and vertices.  Each {\em cell} is a connected component of $\mathbb{R}^d \setminus \mathcal{H}$.  Each vertex is an intersection point of some $d$ hyperplanes in $\mathcal{H}$.  An edge is a line segment between two vertices of $\mathcal{A}$.
\end{definition}

We arrive at an arrangement of hyperplanes by applying the duality transform introduced by Chester\ea~\cite{mrtop}, which fixes an arbitrary positive real $\tau$ and converts every point $p_i\in\ds$ to a hyperplane $h_i$ by considering $p$ as a vector $\vec{p}$ and constructing the hyperplane $h_i$ to be all vectors $\vec{x}$ that solve $\vec{p}\cdot\vec{x}=\tau$.

\begin{figure}[bht]
 \centering
 \includegraphics[scale=.35, clip=true, trim=0 16.5cm 7.5cm 0]{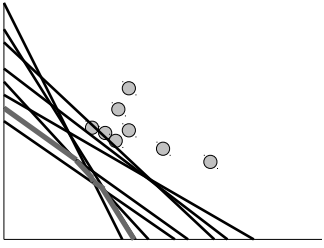}
 \caption{\label{fig:contour_nba}The eight basketball players from Table~\ref{fig:nba}, considering only the attributes \textit{points} and \textit{rebounds}, both normalized.  The tuples are transformed into translated nullspace equations, and the resulting arrangement of lines is shown.  Also depicted in thicker, light magenta lines is the second top-$k$ contour~\cite{mrtop}, a succinct representation of the $2$nd ranked tuples for any top-$k$ query.}
\end{figure}

\begin{definition}[translated nullspace transform~\cite{vecProj}]\label{def:duality}
Given a fixed positive real, $\tau$, and a dataset of $d$-dimensional points $\ds$, the {\em translated nullspace transform} transform each primal space point $p_i\in\ds$ into a dual space ($d$-$1$)-hyperplane $h_i$ (or line $l_i$ in two dimensions) composed of all vector solutions to the equation $\vec{p_i}\cdot\vec{x}=\tau$.
\end{definition}

For the basketball example, considering only the attributes \textit{points} and \textit{rebounds}, which have first been normalized, the arrangement of lines produced by the translated nullspace duality transform is illustrated in Figure~\ref{fig:contour_nba}.  Note that the intersection points of two lines $l_i$ and $l_j$ occur exactly in the direction of the vector $\vec{\uf}$ for which $\uf(p_i)=\uf(p_j)$.

Two other important concepts that are central ideas in Computational Geometry and of high relevance to this paper are {\em lower envelopes} of arrangements of lines and {\em convex chains} within arrangements of lines.

\begin{definition}[lower envelope]\label{def:envelope}
The lower envelope of an arrangement of lines is the set of edges under which no other edges exist.  
\end{definition}

For the purposes of this paper, in which we consider only the positive quadrant of Euclidean space, the {\em lower envelope} is the set of edges closest to the origin, $\origin$.  

\begin{definition}[convex chain]\label{def:chain}
A convex chain in an arrangement of lines $\mathcal{A}_{\ls}$ is the lower envelope in the arrangement of some subset $\ls'\subseteq\ls$ of lines, $\mathcal{A}_{\ls'}$.
\end{definition}

Alternatively, a convex chain can be considered to be any set of edges in the arrangement that form a convex polygon with $\origin$.

\subsection{Top-k Depth Contours}\label{sec:contour_defs}
We also recall here two definitions to establish what are top-$k$ depth contours, since they form the basis of our algorithms.

\begin{definition}[Top-$k$ Depth~\cite{mrtop}]\label{def:depth}
The {\em\dd} of a point $p$ within an arrangement $\mathcal{A}$, is the number of edges of $\mathcal{A}$ between $p$ and the origin.  That is to say, the depth of $p$ is the number of intersections between edges of $\mathcal{A}$ and $[\mathcal{O},p]$.\footnote{We remark that this is identical to the more familiar concept of a {\em level} if all the lines pass through the positive quadrant, as we assume here.  Nonetheless, we adopt this definition because there is no reason why the techniques described in this paper cannot be extended easily to handle attributes that range into negative values.}  Similarly, the {\em\dd} of a cell or edge of $\mathcal{A}$ is the \dd\ of every point within that cell.
\end{definition}

In their paper, Chester\ea~\cite{mrtop} show that the rank of a point in a dataset $\ds$ is precisely its \dd, and that \dd\ creates a series of $n$ contours in $\mathbb{R}^d$, the $i$'th of which is comprised of the transformed points that had rank exactly $i$ in $\ds$.

\begin{definition}[\dd\ contour~\cite{mrtop}]\label{def:contour}
A {\em\dd\ contour} is the set of edges in an arrangement $\mathcal{A}_{\mathcal{L}}$ that have \dd\ exactly $k$.
\end{definition}

\subsection{Problem Definition}\label{sec:prob_defs}
Now, we can formally describe the problem under study in this paper:

\medskip \noindent \textbf{Problem Definition 1.} Given any integer $\regretsize$ and set $\ds$ of $n$ $d$-dimensional points, find an optimal $k$-regret minimizing set of order $\regretsize$, $\mins_{\regretsize}(\ds,\famUnit)$, for the family of positive unit linear functions $\famUnit$.\medskip

\section{A contour view of regret}\label{sec:technique}
In this section, we show that the concept of regret that was introduced by Nanongkai\ea~\cite{regretMin}\---and also the generalisation we introduce in this paper\---are strongly connected to the dual space concept of top-$k$ depth contours introduced by Chester\ea~\cite{mrtop}.  More precisely, we prove Theorem~\ref{thm:alt_form} which equates the problem of finding an optimal $k$-regret minimizing set to a dual space problem of finding a set of lines that are ``closest'' to the top-$k$ depth contour.  This alternative formulation of the problem facilitates designing algorithms in the dual space for two dimensions (Section~\ref{sec:algo2d}) and general dimension (Section~\ref{sec:allD}) to find optimal regret minimizing sets.

The argument proceeds by showing that the contour itself, $\contour_k$, is the optimal solution, provided that it is small enough, $|\contour_k|\leq \regretsize$ (Lemma~\ref{thm:contour_optimal}).   In the dual space, the regret ratio of a line relative to another line, given a utility function $f$, is given by the relative distances of the lines from the origin in the direction indicated by $f$ (Lemma~\ref{thm:regret_is_distance}).  So, the evaluation of regret in dual space is a scaled Euclidean distance computation.

We also show that the best options available to users within a set of points $\ss\subseteq\ds$ is exactly given in the the set of dual lines of $\ss$, their lower envelope (Lemma~\ref{thm:envelope_best_of_ss}).  So, minimizing the scaled distance of that envelope from the contour yields an optimal solution (Theorem~\ref{thm:alt_form}).

\begin{lemma}\label{thm:contour_optimal}
The set of points contributing to $\contour_k$ is a $k$-regret minimizing set $\mins_{\regretsize}(\ds,\faml)$ if $|\contour_k|\leq\regretsize$.
\end{lemma}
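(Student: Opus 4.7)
\medskip

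The plan is to show that the set $\ss$ of points contributing to $\contour_k$ actually achieves a maximum $k$-regret ratio of $0$, and since $|\ss|\leq\regretsize$ by hypothesis, this forces $\ss$ to be optimal (as the regret ratio is non-negative by definition). In other words, I would not just argue optimality, I would argue the stronger claim that $\krratio(\ss,\faml)=0$, which then trivially matches the minimum.

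First I would fix an arbitrary $\uf\in\faml$ and examine the ray from $\origin$ in the direction $\vec{\uf}$ in dual space. By the duality of Definition~\ref{def:duality}, the line $l_i$ dual to a point $\point_i\in\ds$ crosses this ray at the vector $\vec{x}$ of magnitude $\tau/\uf(\point_i)$, so the ordering of crossings along the ray (from $\origin$ outward) is exactly the descending-score ordering of the points of $\ds$ under $\uf$. Consequently, the $k$'th edge struck by the ray lies on a line $l_{\point^*}$ whose preimage $\point^*\in\ds$ is a point satisfying $\uf(\point^*)=\kgain(\ds,\uf)$. By Definitions~\ref{def:depth} and~\ref{def:contour}, this edge belongs to $\contour_k$; hence $\point^*$ is one of the points contributing to $\contour_k$, i.e.\ $\point^*\in\ss$.

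Next I would observe that since $\point^*\in\ss$,
\[
\gain(\ss,\uf)\;=\;\max_{\point\in\ss}\uf(\point)\;\geq\;\uf(\point^*)\;=\;\kgain(\ds,\uf).
\]
Plugging this into Definition~\ref{def:kregret} yields $\kregret(\ss,\uf)=\max(\kgain(\ds,\uf)-\gain(\ss,\uf),0)=0$, and therefore $\krratio(\ss,\uf)=0$. Because $\uf$ was arbitrary, taking the supremum over $\faml$ (Definition~\ref{def:maxRegretRatio}) gives $\krratio(\ss,\faml)=0$. Since the $k$-regret ratio of any subset is non-negative, $\ss$ attains the global minimum of $\krratio(\cdot,\faml)$ among subsets of $\ds$, and the size constraint $|\ss|=|\contour_k|\leq\regretsize$ makes $\ss$ a legal member of the argmin in Definition~\ref{def:regretMin}, so $\ss\in\mins_{\regretsize}(\ds,\faml)$.

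The main obstacle, and the only non-trivial step, is the duality argument linking a ray in direction $\vec{\uf}$ to the top-$k$ ranking under $\uf$: I need the ray to cross every line $l_i$ exactly once in the positive quadrant, and in the order induced by $\uf$. This is where the standing assumption that all lines pass through the positive quadrant (mentioned in the footnote after Definition~\ref{def:depth}) is used, together with the fact that $\uf\in\faml$ has strictly positive coefficients so that $\tau/\uf(\point_i)$ is a positive real for every $\point_i$. Everything else is a direct unwinding of Definitions~\ref{def:jgain}--\ref{def:regretMin}.
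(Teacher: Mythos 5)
Your proposal is correct and follows essentially the same route as the paper: both arguments rest on the fact that, for every $\uf\in\faml$, the contour contributes a point of rank exactly $k$ under $\uf$, so the set achieves $k$-regret ratio $0$ for every utility function and is therefore trivially optimal. You simply unpack in more detail (via the dual-ray crossing order and the non-negativity of the regret ratio) what the paper's two-line proof leaves implicit.
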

\begin{proof}
$\contour_k$ is constructed such that, for any linear function $\uf\in\faml$, a point $p$ on $\contour_k$ has rank exactly $k$.  Therefore, $\krratio(\{p\},\uf)=0, \forall\uf\in\faml$.
\end{proof}

To summarize Lemma~\ref{thm:contour_optimal}, the contour is necessarily an optimal solution for any $\regretsize\geq|\contour_k|$ because it is a representation of the $k$'th ranked tuple for any linear utility function, and the $k$-regret ratio of the $k$'th ranked tuple is $0$.  

Since the $k$-contour represents the barrier of no $k$-regret, any points transformed to lines farther from the origin $\origin$ than the contour have positive regret proportional to the distance from $\origin$.  Conversely, any points transformed to lines closer to the origin than $\contour_k$ with respect to $\uf$ have $k$-regret=0, because they are within the top-$k$ on $\uf$.

\begin{lemma}\label{thm:regret_is_distance}
For any utility function $\uf\in\famUnit$ and tuple $p_i\in\ds$ transformed to line $l_i\in\ls$, let $\Delta_{\contour_k}$ denote the distance of $\contour_k$ from $\origin$ with respect to $\vec{\uf}$, $\Delta_i$, the distance of $l_i$ from $\origin$, and $\Delta'\geq 0$ denote the distance of $l_i$ from $\contour_k$.  Then, $\krratio(\{p_i\})=\Delta'/\Delta_i$.
\end{lemma}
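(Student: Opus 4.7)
The plan is to recast both distances in terms of scores and then reduce the claimed identity to the algebraic definition of $\krratio$. The central tool is the translated nullspace transform: the line $l_i$ is the locus of vectors $\vec{x}$ with $\vec{p_i}\cdot\vec{x}=\tau$. When we intersect $l_i$ with the ray from $\origin$ in the direction of the unit vector $\vec{\uf}$, we get the point $\Delta_i \vec{\uf}$ satisfying $\vec{p_i}\cdot(\Delta_i\vec{\uf})=\tau$, so $\Delta_i=\tau/\uf(p_i)$. Since $\contour_k$ is, by construction, the set of dual points that correspond to rank $k$ for every linear utility, the same reasoning yields $\Delta_{\contour_k}=\tau/\kgain(\ds,\uf)$ along the ray $\vec{\uf}$.

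Next I would split on the sign of $\uf(p_i)-\kgain(\ds,\uf)$. In the case $\uf(p_i)\geq\kgain(\ds,\uf)$, $p_i$ is in the top-$k$ on $\uf$, so $\kregret(\{p_i\},\uf)=0$ and hence $\krratio(\{p_i\},\uf)=0$. In dual space this corresponds to $\Delta_i\leq\Delta_{\contour_k}$, i.e.\ $l_i$ lies on or inside the contour with respect to $\vec{\uf}$; the stipulation $\Delta'\geq 0$ then forces $\Delta'=0$, and both sides of the claim vanish. In the complementary case $\uf(p_i)<\kgain(\ds,\uf)$, the line $l_i$ is strictly farther from $\origin$ than $\contour_k$ along $\vec{\uf}$, so $\Delta'=\Delta_i-\Delta_{\contour_k}$.

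Substituting the expressions from the first paragraph gives
\[
\frac{\Delta'}{\Delta_i}=\frac{\Delta_i-\Delta_{\contour_k}}{\Delta_i}=1-\frac{\tau/\kgain(\ds,\uf)}{\tau/\uf(p_i)}=1-\frac{\uf(p_i)}{\kgain(\ds,\uf)}=\frac{\kgain(\ds,\uf)-\uf(p_i)}{\kgain(\ds,\uf)},
\]
which is exactly $\krratio(\{p_i\},\uf)$ by Definition~\ref{def:kregret}, completing both cases.

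The routine part is this final algebraic chain; the only substantive step is the geometric identification $\Delta_{\contour_k}=\tau/\kgain(\ds,\uf)$, which I expect to be the main obstacle, since it requires invoking the characterization of $\contour_k$ from \cite{mrtop} as the locus of rank-$k$ points under the translated nullspace transform, and then verifying that the intersection of the ray $\vec{\uf}$ with $\contour_k$ is carried by the line dual to the tuple achieving $\kgain(\ds,\uf)$. Once that correspondence is nailed down, the two displayed ratios collapse to the same scalar and the lemma follows.
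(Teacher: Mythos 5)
Your proposal is correct and follows essentially the same route as the paper's own proof: both express the distances along the ray $\vec{\uf}$ as $\Delta_i=\tau/\uf(p_i)$ and $\Delta_{\contour_k}=\tau/\kgain(\ds,\uf)$ using the translated nullspace equation and the rank-$k$ characterization of $\contour_k$, and then reduce $\Delta'/\Delta_i$ to the definition of $\krratio$ by the same algebra. Your explicit case split on the sign of $\uf(p_i)-\kgain(\ds,\uf)$ is a slightly more careful treatment of the $\Delta'\geq 0$ stipulation than the paper gives, but it does not change the argument.
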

\begin{proof}
Recall that each line $l_i\in\ls$ is constructed of vectors $\vec{x}$ such that $\vec{p_i}\cdot\vec{x}=\tau$.  So, since $||\vec{\uf}||=1$, the distance of $l_i$ from $\origin$ in the direction of $\vec{\uf}$ is $\tau/\uf(p_i)$.  

Thus, if $p_{\contour_k}\in\ds$ represents a point on $\contour_k$ in the direction of $\vec{\uf}$, 
\begin{eqnarray*}
\Delta'&=&\Delta_i-\Delta_{\contour_k}\\
&=&\frac{\tau}{\uf(p_i)}-\frac{\tau}{\uf(p_{\contour_k})}\\
&=&\tau\frac{\uf(p_{\contour_k})-\uf(p_i)}{\uf(p_i)\uf(p_{\contour_k})}\\
&=&\Delta_i(\krratio(\{p_i\}))
\end{eqnarray*}
\end{proof}

\begin{corollary}\label{thm:dist_with_fixed_uf}
For a fixed $\uf\in\famUnit$, let $\Delta_i$ denote the nonnegative distance for some line $l_i\in\ls$ from $\contour_k$ in the direction of $\vec{\uf}$.  Then, $\Delta_i\propto\krratio(\{p_i\},\uf)$.
\end{corollary}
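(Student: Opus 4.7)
The plan is to derive this corollary directly from Lemma~\ref{thm:regret_is_distance}. First, I would clarify a notational shift that otherwise risks confusion: the $\Delta_i$ appearing in the corollary plays the role of the lemma's $\Delta'$ (namely, the distance from $l_i$ to $\contour_k$ along $\vec{\uf}$), while the lemma additionally uses the symbol $\Delta_i$ for the distance from $l_i$ to $\origin$ along $\vec{\uf}$. For the duration of the proof I would rename the latter to $\Delta_i^{O}$ so that the two quantities do not collide.

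Next, I would simply rearrange the identity $\krratio(\{p_i\},\uf) = \Delta_i/\Delta_i^{O}$ established by the lemma to read $\Delta_i = \Delta_i^{O}\cdot\krratio(\{p_i\},\uf)$. The key observation, already extracted in the proof of the lemma, is that $\Delta_i^{O} = \tau/\uf(p_i)$ is a strictly positive real number completely determined by the fixed choices of $\tau$, $\uf$, and $p_i$; in particular, it is independent of the regret ratio itself. Thus $\Delta_i$ and $\krratio(\{p_i\},\uf)$ differ only by a positive multiplicative constant, which is the content of the proportionality claim.

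There is essentially no obstacle in the argument; the corollary is a direct reformulation of Lemma~\ref{thm:regret_is_distance} that trades the exact ratio for a proportionality statement. The point of stating it separately is purely preparatory for the algorithms in Sections~\ref{sec:algo2d} and~\ref{sec:allD}, where, for a fixed query direction $\vec{\uf}$, it will be convenient to rank candidate points by their more geometrically natural distances from $\contour_k$ and use those distances as a faithful surrogate for the $k$-regret ratios along that direction.
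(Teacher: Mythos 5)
Your route is the intended one: the paper states this corollary with no proof at all, presenting it as an immediate consequence of Lemma~\ref{thm:regret_is_distance}, and your rearrangement of the identity $\krratio(\{p_i\},\uf)=\Delta'/\Delta_i$ (in the lemma's notation) is exactly that derivation. Your notational clean-up --- separating the distance of $l_i$ to $\origin$ from its distance to $\contour_k$ --- is a genuine improvement, since the corollary silently reuses the symbol $\Delta_i$ for a different quantity than the lemma does.

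There is, however, one point you should not let pass, and it is a weakness you inherit from the statement rather than introduce: the ``positive multiplicative constant'' you identify is $\tau/\uf(p_i)$, which depends on $i$. For a single fixed line the proportionality claim is vacuous (any two positive reals are proportional), and the only reading under which the corollary has content --- the reading the paper actually relies on in the proof of Lemma~\ref{thm:envelope_best_of_ss}, where distances to $\contour_k$ are compared across \emph{different} lines for the same $\uf$ --- requires the constant to be uniform in $i$, which yours is not. What does hold uniformly is a monotone correspondence: with $\Delta_{\contour_k}$ the distance of $\contour_k$ from $\origin$ along $\vec{\uf}$ (static once $\uf$ is fixed), the lemma gives
\begin{equation*}
\krratio(\{p_i\},\uf)=\frac{\Delta'}{\Delta_{\contour_k}+\Delta'}
\qquad\text{equivalently}\qquad
\Delta'=\Delta_{\contour_k}\cdot\frac{\krratio(\{p_i\},\uf)}{1-\krratio(\{p_i\},\uf)},
\end{equation*}
and the map $x\mapsto x/(1-x)$ is strictly increasing on $[0,1)$ with a prefactor that no longer depends on $i$. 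If you want the corollary to license the cross-line comparisons it is later used for, you should prove this monotonicity (ordering by $\Delta'$ agrees with ordering by $\krratio$) rather than proportionality with an $i$-dependent constant; strict proportionality in the usual sense is simply false across lines.
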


Lemma~\ref{thm:regret_is_distance} establishes that the the regret for a singleton set $\{p_i\}$ on a function $\uf$ in dual space is related to the Euclidean distance of the transformed line $l_i$ to $\origin$ and to $\contour_k$.  Corollary~\ref{thm:dist_with_fixed_uf} notes that, if we consider only a single utility function, then the regret for different singleton sets can be straightforwardly compared by the distance from $\contour_k$, since the distance of $\contour_k$ to $\origin$ is static.  Next, we show that, given a non-singleton set, the maximum $k$-regret can be evaluated efficiently, because it is to be observed on the lower envelope in the dual space.

\begin{lemma}\label{thm:envelope_best_of_ss}
For a set $\ss\subseteq\ds$, let $\ls$ denote the set of lines produced by transforming every point $p_i\in\ss$ into its translated nullspace, $l_i$.  The lower envelope of $\ls$ captures the maximum gain (and, ergo, minimum regret) of $\ss$ for any $\uf\in\famUnit$.
\end{lemma}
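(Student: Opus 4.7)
The plan is to translate the question ``who in $\ss$ maximizes $\uf$?'' into ``which line in $\ls$ is closest to $\origin$ along $\vec{\uf}$?'' and then invoke the definition of the lower envelope to conclude that the answer always lies on that envelope.

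First I would fix an arbitrary $\uf\in\famUnit$ and, borrowing the computation already used in the proof of Lemma~\ref{thm:regret_is_distance}, observe that the (signed, nonnegative) distance of each dual line $l_i$ from $\origin$ along $\vec{\uf}$ is exactly $\Delta_i=\tau/\uf(p_i)$. Since $\tau>0$ and $\uf(p_i)>0$ on $\famUnit$, the map $\uf(p_i)\mapsto \tau/\uf(p_i)$ is strictly decreasing, so maximizing $\uf(p_i)$ over $p_i\in\ss$ is the same as minimizing $\Delta_i$ over $l_i\in\ls$. Therefore the point $p^{\star}\in\ss$ that realizes $\gain(\ss,\uf)=\max_{p\in\ss}\uf(p)$ is precisely the point whose dual line $l^{\star}$ is first pierced by the ray from $\origin$ in direction $\vec{\uf}$.

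Next I would show that this piercing line belongs to the lower envelope of $\ls$. By Definition~\ref{def:envelope}, the lower envelope is the collection of edges of $\mathcal{A}_{\ls}$ with no other edge between them and $\origin$; equivalently, the intersection of $\mathcal{A}_{\ls}$ with the ray from $\origin$ in any positive direction $\vec{\uf}$ first occurs on an edge of the lower envelope. Hence the line $l^{\star}$ that minimizes $\Delta_i$ in direction $\vec{\uf}$ contributes an edge to the lower envelope through that piercing point, and so the gain $\gain(\ss,\uf)=\tau/\Delta_{l^{\star}}$ is recoverable directly from the envelope. Since $\uf$ was arbitrary, the lower envelope of $\ls$ encodes $\gain(\ss,\uf)$ for every $\uf\in\famUnit$.

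Finally, combining with Definitions~\ref{def:regret} and~\ref{def:kregret}, the $k$-regret $\kregret(\ss,\uf)=\max(\kgain(\ds,\uf)-\gain(\ss,\uf),0)$ depends on $\ss$ only through $\gain(\ss,\uf)$, so maximizing gain is equivalent to minimizing $k$-regret for every fixed $\uf$. Thus the lower envelope of $\ls$ simultaneously exhibits the maximum gain and the minimum $k$-regret of $\ss$ over all $\uf\in\famUnit$. The only subtle step is the second one, because one must be careful that $\vec{\uf}$ lies in the positive quadrant so that the ray from $\origin$ genuinely meets the lower envelope first rather than the upper envelope; this is guaranteed by the restriction to $\famUnit$, and I would state it explicitly to avoid confusion.
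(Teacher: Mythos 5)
Your proposal is correct and follows essentially the same route as the paper: both arguments identify the line of $\ls$ nearest to $\origin$ in the direction of $\vec{\uf}$ as lying on the lower envelope, and then relate that minimal distance to maximal gain and hence minimal regret. The only cosmetic difference is that you work directly with the explicit distance $\tau/\uf(p_i)$ and the monotonicity of $\kregret$ in $\gain$, whereas the paper routes the final step through Corollary~\ref{thm:dist_with_fixed_uf}; your version is somewhat more explicit but not a different argument.
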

\begin{proof}
For any $\uf\in\famUnit$, the nearest line to $\origin$ is that line $l$ which is on the lower envelope in the direction of $\vec{\uf}$.  Since $l$ has the smallest distance to $\origin$ of all lines in $\ls$, it also has the smallest distance (possibly negative) to $\contour_k$ of all lines in $\ls$.  By Corollary~\ref{thm:dist_with_fixed_uf}, $p_i$ then also has the minimum regret with respect to $\uf$ of all $p\in\ss$.
\end{proof}

Furthermore, while the lower envelope captures all the interestingness of a set $\ss$, there are only particular points on the envelope where the maximum regret can occur: the points where the distance ratio could be maximized.  Lemma~\ref{thm:maxDistAtVertex} confirms that these points are exactly the vertices of the lower envelope and the vertices of the contour.

\begin{lemma}\label{thm:maxDistAtVertex}
Consider a contour $\contour_k$ and a convex chain of line segments $\mathcal{C}$.  Let $\Delta_i$ denote the distance of $\cchain$ from $\origin$ in the direction of $\vec{\uf}$ and, similarly, $\Delta'$, the distance of $\cchain$ to $\contour_k$.  The expression $\Delta'/\Delta_{i}$ is maximized either at a vertex of $\contour_k$ or a vertex of $\mathcal{C}$.
\end{lemma}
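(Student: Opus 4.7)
The plan is to partition the set of directions $\vec{\uf}\in\famUnit$ into finitely many arcs according to which edge of $\contour_k$ and which edge of $\cchain$ the ray from $\origin$ in direction $\vec{\uf}$ crosses, and then to show that on each such arc the ratio $\Delta'/\Delta_i$ is monotonic. Monotonicity forces the maximum over each arc to be attained at an endpoint of the arc; the endpoints of arcs are exactly the directions in which the active edge of $\contour_k$ changes (a vertex of $\contour_k$) or the active edge of $\cchain$ changes (a vertex of $\cchain$), which yields the claim.

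Fix one such arc. On it, $\cchain$ coincides with a single dual line corresponding to some $p\in\ds$, namely the solution set of $\vec{p}\cdot\vec{x}=\tau$, and $\contour_k$ coincides with a single dual line corresponding to some $p_{\contour_k}\in\ds$. As computed in Lemma~\ref{thm:regret_is_distance}, the distance from $\origin$ to any such dual line in direction $\vec{\uf}$ is $\tau/\uf(p)$ (respectively $\tau/\uf(p_{\contour_k})$), so
\begin{equation*}
\frac{\Delta'}{\Delta_i}\;=\;\frac{\tau/\uf(p)-\tau/\uf(p_{\contour_k})}{\tau/\uf(p)}\;=\;1-\frac{\uf(p)}{\uf(p_{\contour_k})}.
\end{equation*}
Maximizing $\Delta'/\Delta_i$ on the arc is therefore equivalent to minimizing the ratio $\uf(p)/\uf(p_{\contour_k})$ with $p$ and $p_{\contour_k}$ both fixed.

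Next I would parameterize the arc by angle $\theta$, so that $\vec{\uf}=(\cos\theta,\sin\theta)$ and both $\uf(p)$ and $\uf(p_{\contour_k})$ are linear combinations of $\cos\theta$ and $\sin\theta$ of the form $A\cos\theta+B\sin\theta$ and $C\cos\theta+D\sin\theta$, respectively. A direct differentiation of the quotient shows that the numerator of $(\mathrm{d}/\mathrm{d}\theta)(\uf(p)/\uf(p_{\contour_k}))$ collapses to the constant $BC-AD$, divided by $\uf(p_{\contour_k})^{2}>0$. Hence the derivative has constant sign on the entire arc, so the ratio is monotonic in $\theta$ and its extremum on the arc is attained at an endpoint.

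The main obstacle I anticipate is bookkeeping at the boundary of the positive quadrant and in degenerate configurations: one has to argue that the subdivision of $\famUnit$ into arcs is well-defined (both $\cchain$ and $\contour_k$ are $xy$-monotone convex chains meeting every ray from $\origin$ in the positive quadrant exactly once), and that the endpoints of the arcs are precisely projections of vertices of $\cchain$ or $\contour_k$ along rays from $\origin$. Once that is in place, the monotonicity argument inside each arc is routine and the conclusion follows by taking the maximum over the finitely many arc endpoints.
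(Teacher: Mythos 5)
Your proof is correct and follows essentially the same route as the paper's, which simply observes that both $\contour_k$ and $\cchain$ are piecewise linear and concludes that $\Delta'/\Delta_i$ can only be maximized at a junction point. Your version actually supplies the step the paper leaves implicit: on each arc of directions where the active edge of $\contour_k$ and the active edge of $\cchain$ are both fixed, the ratio reduces to a quotient of two linear forms in $(\cos\theta,\sin\theta)$ whose derivative has the constant-sign numerator $BC-AD$, hence the ratio is monotone on the arc and attains its maximum at an arc endpoint, i.e.\ at a vertex of $\contour_k$ or of $\cchain$.
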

\begin{proof}
Both $\contour_k$ and $\mathcal{C}$ are piecewise linear, so the expression $\Delta'/\Delta_{i}$ can only be maximized at some junction point.
\end{proof}

Since Lemmata~\ref{thm:envelope_best_of_ss}~and~\ref{thm:maxDistAtVertex} permit evaluating regret in the dual space of translated nullspaces, we can derive an alternative view of the problem of finding an optimal $k$-regret minimizing set, as shown in Theorem~\ref{thm:alt_form} and illustrated in Figure~\ref{fig:extensions}.

\begin{figure}[bth]
 \centering
 \includegraphics[scale=.285, clip=true, trim=0 14cm 5cm 0]{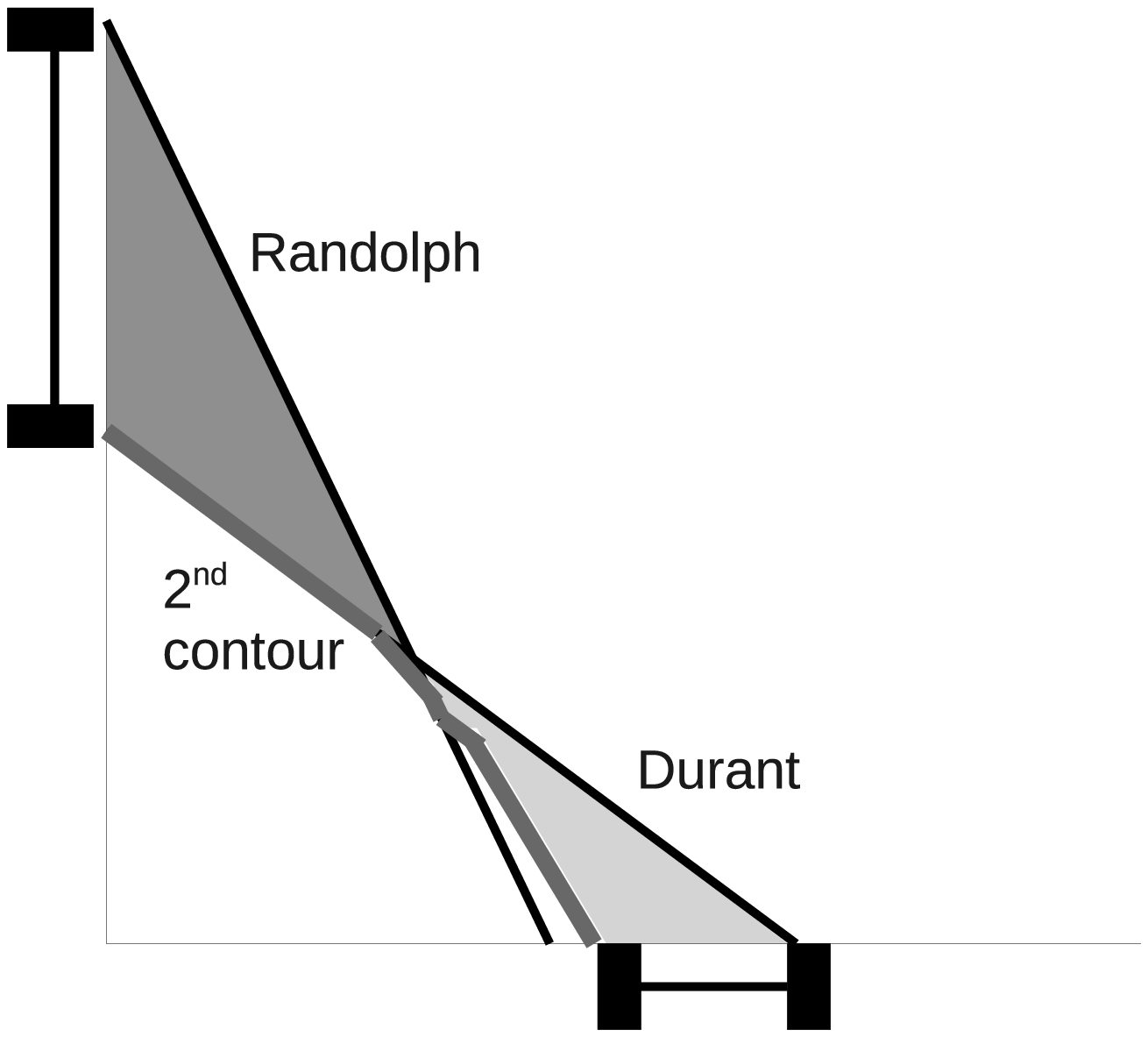}
 \includegraphics[scale=.285, clip=true, trim=0 14cm 5cm 0]{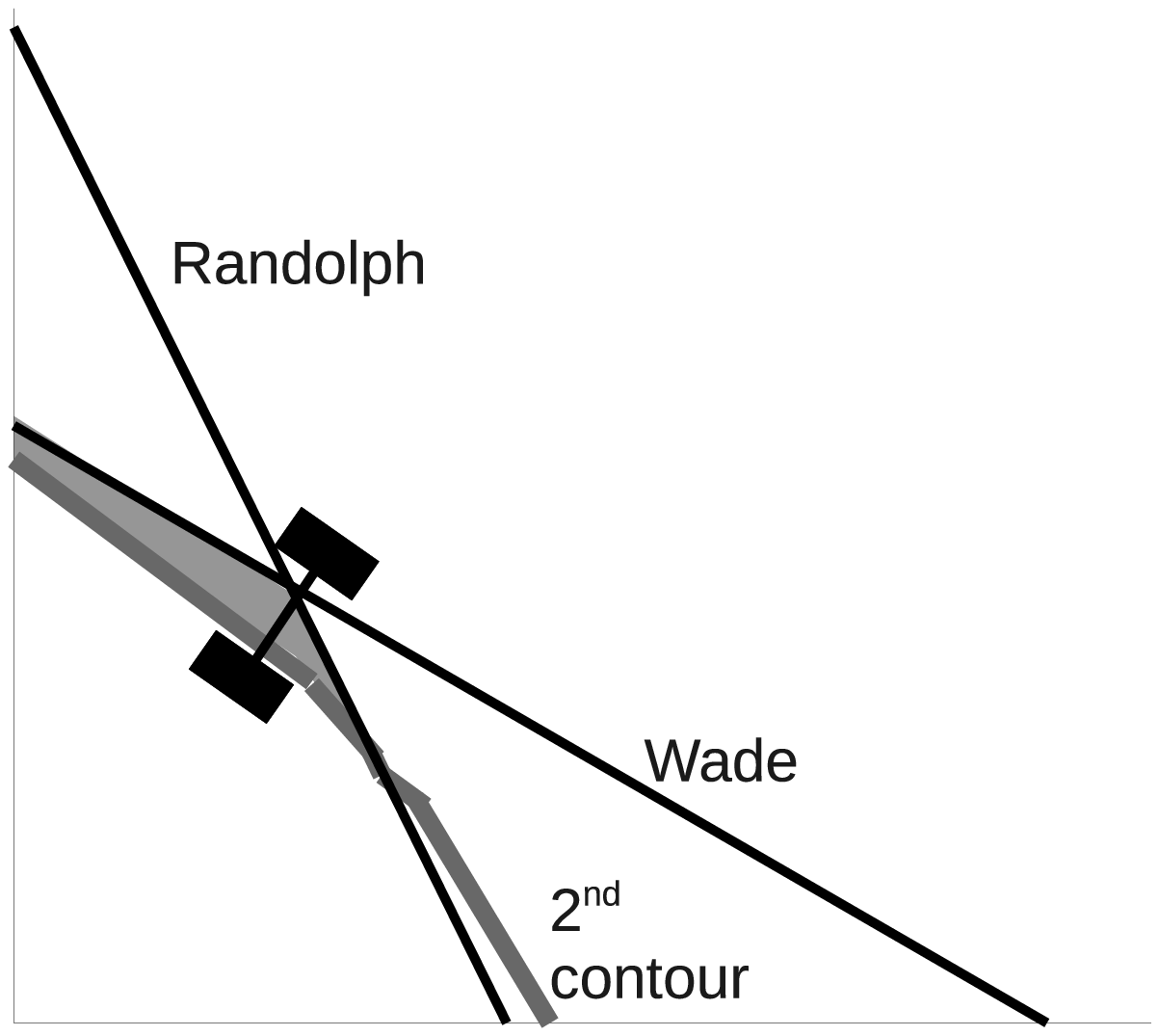}
 \caption{\label{fig:extensions} An illustration of $2$-regret.  Shown left are two different order-$1$ sets, $\{\mathrm{Durant}\}$ and $\{\mathrm{Randolph}\}$, along with the second contour.  The axes are the directions of maximum regret for the respective sets.  To the right is shown the (non-optimal) order-$2$ set $\{\mathrm{Randolph}, \mathrm{Wade}\}$. }
\end{figure}

\begin{theorem}\label{thm:alt_form}
Let $g$ denote the function which transforms any point $p\in\ds$ to its translated nullspace and let $\contour_k$ denote the top-$k$ depth contour of $\ds$.  The optimal $k$-regret minimizing set $\ss$ of $\ds$ with size at most $\regretsize$ on the family of positive unit linear functions $\famUnit$ is exactly the set of lines $\ls=\{g(p), p\in\ss\}, |\ls|\leq\regretsize$, the lower envelope $\mathcal{E}$ of which minimizes the maximum ratio of distances from $\env$ to $\contour_k$ and $\env$ to $\origin$ at any vertex of $\env$ and of $\contour_k$.
\end{theorem}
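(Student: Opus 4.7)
The plan is to push the supremum over $\famUnit$ that appears in the definition of $\krratio(\ss,\famUnit)$ through the dual transform and convert it into a maximum over a finite vertex set, using the three lemmas already in hand. First, I would fix a candidate subset $\ss\subseteq\ds$ with $|\ss|\le\regretsize$ and let $\ls=\{g(p):p\in\ss\}$ with lower envelope $\env$. For an arbitrary $\uf\in\famUnit$, Lemma~\ref{thm:envelope_best_of_ss} says that the ray from $\origin$ in direction $\vec{\uf}$ meets $\env$ at the line of $\ls$ of maximum $\uf$-score, i.e.\ at distance $\Delta_{\env}(\uf)=\tau/\gain(\ss,\uf)$; likewise the same ray meets $\contour_k$ at distance $\Delta_{\contour_k}(\uf)=\tau/\kgain(\ds,\uf)$. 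By Lemma~\ref{thm:regret_is_distance} applied to the unique line of $\ls$ realising the envelope in direction $\vec{\uf}$, one therefore has
\[
\krratio(\ss,\uf) \;=\; \frac{\max(\Delta_{\env}(\uf)-\Delta_{\contour_k}(\uf),\,0)}{\Delta_{\env}(\uf)}.
\]
So $\krratio(\ss,\famUnit)$ is exactly the supremum of this signed distance ratio from $\env$ to $\contour_k$ (clamped at $0$) over all unit directions $\vec{\uf}$ in the positive quadrant.

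Second, I would argue that this supremum is in fact a maximum attained at a vertex of $\env$ or of $\contour_k$. Both curves are piecewise linear convex chains around $\origin$, so along any maximal common ``segment-segment'' piece the ray-distance to each curve is a smooth function of $\vec{\uf}$ given by intersecting the ray with a fixed line; the ratio is therefore the ratio of two linear forms in $\vec{\uf}$, which on such a piece has no interior critical maximum that is not attained at an endpoint. This is essentially Lemma~\ref{thm:maxDistAtVertex}, and applying it gives
\[
\krratio(\ss,\famUnit) \;=\; \max_{v}\; \frac{\Delta'(v)}{\Delta_{\env}(v)},
\]
where $v$ ranges over the vertices of $\env$ and the vertices of $\contour_k$ (reading the relevant $\vec{\uf}$ off the direction from $\origin$ to $v$).

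Third, I would invert the correspondence $\ss\leftrightarrow\ls$. Every subset $\ss\subseteq\ds$ of size at most $\regretsize$ maps, via $g$, to a unique set $\ls$ of at most $\regretsize$ translated-nullspace lines, and this map is a bijection onto such line sets because $g$ is injective on $\ds$. Minimising $\krratio(\ss,\famUnit)$ over admissible $\ss$ is thus the same as minimising the vertex-based ratio $\max_v \Delta'(v)/\Delta_{\env}(v)$ over admissible line sets $\ls$, which is exactly the characterisation claimed. The clamping at $0$ is harmless: whenever $\env$ dips below $\contour_k$ in some direction the corresponding contribution to the $k$-regret is $0$ by Definition~\ref{def:kregret}, and simultaneously $\Delta'$ in Lemma~\ref{thm:regret_is_distance}/Lemma~\ref{thm:maxDistAtVertex} is taken nonnegative, so both sides agree.

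The main obstacle I anticipate is the vertex-restriction step: one has to be careful that nothing interesting happens in the relative interior of a segment-segment piece, and in particular that the ratio is genuinely non-increasing away from at least one endpoint of every such piece. For the two convex chains around $\origin$ this reduces to a small calculation with the two linear forms giving the ray-distances, but it is the only place where convexity of $\contour_k$ and $\env$ really gets used, so I would single it out as the careful step. Everything else is bookkeeping on Lemmata~\ref{thm:contour_optimal}--\ref{thm:maxDistAtVertex}.
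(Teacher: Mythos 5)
Your proposal is correct and follows essentially the same route as the paper: it assembles Lemmata~\ref{thm:regret_is_distance}, \ref{thm:envelope_best_of_ss}, and \ref{thm:maxDistAtVertex} to convert the supremum over $\famUnit$ into a maximum of the distance ratio taken over the vertices of $\env$ and $\contour_k$, then identifies subsets $\ss$ with line sets $\ls$. Your write-up is merely more explicit than the paper's three-line proof (the ratio-of-linear-forms justification of the vertex restriction and the treatment of the clamp at $0$), but the decomposition is the same.
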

\begin{proof}
From Lemma~\ref{thm:regret_is_distance}, the regret ratio for each point in $\ss$ is equivalent to the ratio of $\env$ to $\contour_k$ and $\env$ to $\origin$, and from Lemma~\ref{thm:envelope_best_of_ss}, the best such ratio is on the lower envelope of $\ls$.  From Lemma~\ref{thm:maxDistAtVertex}, this must occur at either a vertex of $\contour_k$ or a vertex of $\env$.
\end{proof}

A final remark is with regard to the two dimensional case, in particular.  We note that any lower envelope is, in fact, a convex chain, so the two dimensional problem can be viewed rather as searching for the best convex chain.

\begin{lemma}\label{thm:chains_as_solns}
Let $g$ denote the function which transforms any point $p\in\ds$ to its translated nullspace and let $\contour_k$ denote the top-$k$ depth contour of $\ds$.  The optimal $k$-regret minimizing set $\ss\subseteq\ds\subseteq\mathbb{R}^2$ with size at most $\regretsize$ is exactly the convex chain $\mathcal{C}$ through the arrangment of lines $\ls=\{g(p), p\in\ss\}$ that has at most $\regretsize-1$ turns and that minimizes the maximum ratio of the distance from $\mathcal{C}$ to $\contour_k$ and the distance of $\mathcal{C}$ to $\origin$.
\end{lemma}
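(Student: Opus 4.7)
The plan is to specialize Theorem~\ref{thm:alt_form} to two dimensions, using the fact that in the plane a lower envelope is already a convex chain in the sense of Definition~\ref{def:chain}. So most of the work is packaging the bijection between candidate subsets $\ss$ of size at most $\regretsize$ and convex chains in $\mathcal{A}_{\ls}$ with at most $\regretsize-1$ turns, and then quoting Theorem~\ref{thm:alt_form} to transfer the distance-ratio objective.

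\textbf{Step 1.} Invoke Theorem~\ref{thm:alt_form}: the optimal $k$-regret minimizing set $\ss$ of order $\regretsize$ is exactly the preimage under $g$ of the set $\ls'=\{g(p):p\in\ss\}$ of at most $\regretsize$ lines whose lower envelope $\env$ minimizes the worst-case ratio, taken over vertices of $\env$ and of $\contour_k$, of the distance from $\env$ to $\contour_k$ divided by the distance from $\env$ to $\origin$.

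\textbf{Step 2.} Establish the bijection. In the forward direction, take any $\ls'\subseteq\ls$ with $|\ls'|\leq\regretsize$; by Definition~\ref{def:chain} its lower envelope in $\mathcal{A}_{\ls'}$ is a convex chain $\cchain$ in $\mathcal{A}_{\ls}$. In the plane, $\cchain$ is a sequence of edges, each carried by a distinct line of $\ls'$ and meeting its neighbors at a vertex of $\mathcal{A}_{\ls'}$; hence $\cchain$ has at most $|\ls'|-1\leq\regretsize-1$ turn vertices. Conversely, any convex chain $\cchain$ in $\mathcal{A}_{\ls}$ with at most $\regretsize-1$ turns consists of at most $\regretsize$ consecutive edges lying on at most $\regretsize$ distinct lines of $\ls$; these lines $\ls'$ have $\cchain$ as their lower envelope, so $\cchain$ arises from the subset $\ss=g^{-1}(\ls')\subseteq\ds$ of cardinality at most $\regretsize$.

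\textbf{Step 3.} Transfer the objective. Under this bijection, the lower envelope $\env$ of $\ls'$ coincides with the convex chain $\cchain$, and the candidate evaluation vertices in Theorem~\ref{thm:alt_form} (vertices of $\env$ together with vertices of $\contour_k$) coincide with the turn vertices of $\cchain$ plus the vertices of $\contour_k$. Hence minimizing the max distance ratio from $\env$ to $\contour_k$ over $\env$ to $\origin$ over subsets $\ss$ with $|\ss|\leq\regretsize$ is equivalent to minimizing the same ratio over convex chains $\cchain$ with at most $\regretsize-1$ turns, and the optimal chain $\cchain$ corresponds to the optimal $k$-regret minimizing set.

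\textbf{Main obstacle.} There is no serious analytic difficulty here; the only thing one must be careful about is the bookkeeping between ``number of lines'' and ``number of turns'' (a chain on $m$ lines has $m-1$ turns in general position, possibly fewer with coincident intersections), and the appeal to Definition~\ref{def:chain} to guarantee that every convex chain in $\mathcal{A}_{\ls}$ really does come from a subset of $\ls$ (and hence from a subset $\ss\subseteq\ds$), so that the planar minimization is taken over exactly the same collection of objects as in Theorem~\ref{thm:alt_form}.
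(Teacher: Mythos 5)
Your proposal is correct and follows essentially the same route as the paper: specialize Theorem~\ref{thm:alt_form} to the plane, identify lower envelopes of $\leq\regretsize$ lines with convex chains having $\leq\regretsize-1$ turns, and transfer the distance-ratio objective. The only (minor) divergence is in the converse direction of your Step~2: you assert that every convex chain is the lower envelope of its carrier lines (which holds under Definition~\ref{def:chain} read literally), whereas the paper, working with the looser ``convex polygon with $\origin$'' reading, instead argues that an \emph{optimal} chain must be such a lower envelope because otherwise the lower envelope of the same lines would be strictly better.
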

\begin{proof}
Note that any lower envelope of a set of lines is, in fact, a convex chain and that, by convexity, any line can appear at most once on the lower envelope.  So, a convex chain with $\regretsize$ lines will have $\regretsize-1$ turns.  Also, any convex chain with $\regretsize-1$ turns that is optimal must be a lower envelope of some set of lines, otherwise the sequence of turns that follows the lower envelope of the same set of lines will be more optimal.  By Theorem~\ref{thm:alt_form}, this is the optimal $k$-regret minimizing set of size $\regretsize$ for $\ds$.
\end{proof}

\section{An algorithm for two\\ dimensions}\label{sec:algo2d}

\begin{figure*}[tbh]
 \centering
 \subfigure[The arrangement of lines after initialisation]{
  \includegraphics[scale=.35, clip=true, trim=0 14.5cm 5cm 0]{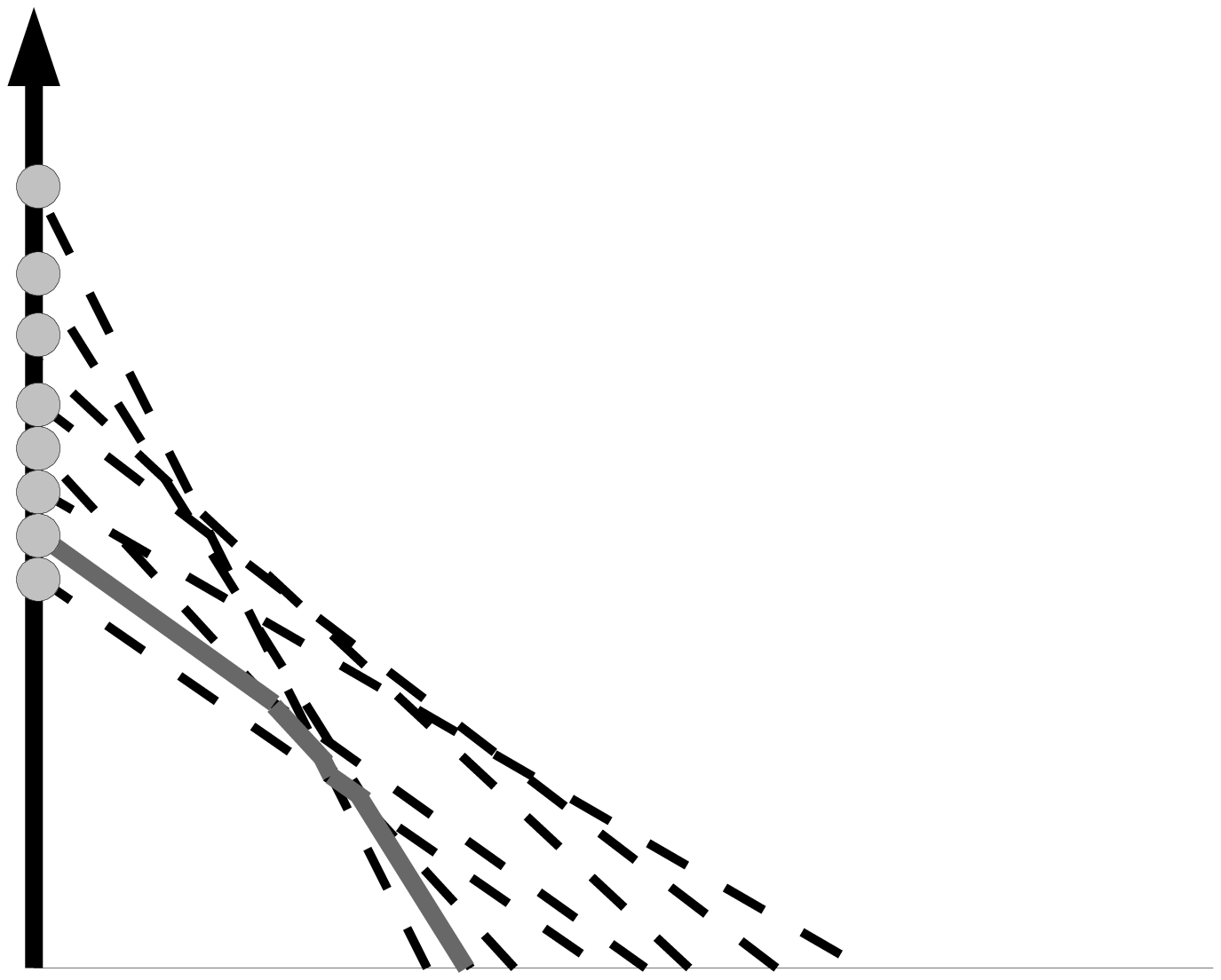}
 }
\hspace{1cm}
 \raisebox{2.5cm}{\subtable[The priority queue, including all initial intersections]{
 \begin{tabular}{l || r}
\multicolumn{2}{c}{$\pq$}\\\hline
\textit{Lines}&\textit{slope of $r$}\\\hline
(Wade, Nowitzki)&35.71\\
(Anthony, Stoudemire)&15.39\\
(Bryant, Anthony)&6.07\\
(Stoudemire, Randolph)&2.64\\\hline
 \end{tabular}
 }}
\hspace{1cm}
 \raisebox{2.65cm}{\subtable[The initial cost values for each line, the difference in y-intercept to the contour]{
 \begin{tabular}{l || l | l}
\multicolumn{3}{c}{$\as$}\\\hline
\textit{Line}&$m=1$&$m=2$\\\hline
Durant&0.000&0.000\\
James&0.000&0.000\\
Wade&0.057&0.057\\
Nowitzki&0.062&0.062\\
Bryant&0.080&0.080\\
Anthony&0.089&0.089\\
Stoudemire&0.105&0.105\\
Randolph&0.188&0.188\\\hline
 \end{tabular}
 }}
 \caption{\label{fig:alg2d_init} The initialisation of the algorithm for two dimensions.  The sweep ray begins at the $y$-axis.  The priority queue $\pq$ is populated with intersection points of lines that neighbour on the $y$-axis and intersect in the positive quadrant, sorted by the order in which $r$ will pass through them.  Every entry in the path list $\as$ is originally set to the distance of the corresponding line to the contour along the $y$-axis. }
\end{figure*}

As we showed in Lemma~\ref{thm:chains_as_solns}, solving the regret minimizing problem reduces to finding the best convex chain $\mathcal{C}$ with fewer than $\regretsize$ turns through an arrangement.  The optimal solution is the one which minimizes the distance ratio of $\cchain$ to $\contour_k$ and $\cchain$ to $\origin$.  There are potentially $n\choose \regretsize$ different convex chains with at most $\regretsize-1$ turns; however, so we need to improve upon the $\order(n^\regretsize)$ naive algorithm which tries every combination.

We offer here a plane sweep, dynamic programming algorithm which runs $\order(\regretsize n^2)$ and independently of $k$.  The algorithm follows each of the $n$ translated nullspace lines in $\ls$ radially from the $y$-axis to the $x$-axis along a sweep line $r$, processing intersection points, and remembering $\regretsize$ of the best paths yet encountered for each line.  We maintain three data structures, each which maintains some invariant with respect to the current position of $r$.  The optimal solution can then be read from one of the data structures at the conclusion of the radial plane sweep.

As one scans the plane, the data structures need to be updated whereever two lines intersect in order to reflect the changed state of the arrangement of lines and to maintain the invariants.  All three of our data structures need to be updated at and only at line intersections.

\subsection{Data Structures}\label{sec:structures}
As mentioned, there are three data structures.  The first is the set of lines, $\ls$, sorted by their distance from the origin in the direction of $r$.  The second is a priority queue, $\queue$, containing intersection (or {\em event}) points yet to be processed, sorted by the order in which $r$ will pass through them.  The third data structure, $\as$ is for the dynamic programming component and maintains for each line the best solutions seen between the $y$-axis and the current position of $r$.  The structure $\as$ is an $n$ by $\regretsize$ matrix.   In each cell $(i,j)$ is stored the optimum convex chain from $r$ back to the $y$-axis which both contains at most $j$ lines and ends on line $l_i$.

\subsubsection{Data structure transitions}\label{sec:transitions}
\begin{figure}[b!]
 \centering
 \includegraphics[scale=.285, clip=true, trim=0 14.5cm 5cm 0]{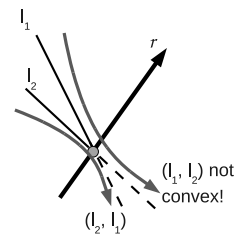}
 \caption{\label{fig:alg2d_inset} An illustration of the three possible paths through an intersection point.  Either line $l_1$ or $l_2$ could simply pass through.  Because an envelope of lines must form a convex chain, on the other hand, only $l_2$ has the luxury of turning onto $l_1$.  The path ($l_2, l_1$) requires an ``illegal'' concave turn. }
\end{figure}

We first describe the algorithm by how the data structures evolve at each intersection point.  Both $\queue$ and $\ls$ behave as in traditional plane sweep algorithms, whereas $\as$ behaves as in a traditional dynamic programming algorithm.

\bigskip
\subsubsection*{$\ls$}
Consider an intersection point $p_{i,j}$, the intersection of lines $l_i$ and $l_j$.  Because the lines are intersecting, we know they are immediately adjacent in $\ls$.  We swap $l_i$ and $l_j$ in $\ls$ to reflect the fact that immediately after $p_{i,j}$, they will have opposite order as compared to beforehand.

\subsubsection*{$\queue$}
The priority queue contains all those intersection points that are between $r$ and the positive $x$-axis that feature two lines which have been adjacent at some point between the positive $y$-axis and $r$.  Again, consider an intersection point $p_{i,j}$.  Immediately thereafter, lines $l_i$ and $l_j$ have been swapped.  As such, there are potentially two new intersection points to add to $\queue$, namely $l_i$ and his new neighbour (should one exist) and $l_j$ and his new neighbour (again, should one exist).  Both these intersection points are added to the appropriate place in $\queue$, provided that they are between $r$ and the positive $x$-axis.  The point $p_{i,j}$ is removed.

\begin{figure*}[tbh]
 \centering
 \subfigure[The arrangement of lines immediately before processing the darker point, (Stoudemire, Randolph).]{
  \includegraphics[scale=.35, clip=true, trim=0 14.5cm 5cm 0]{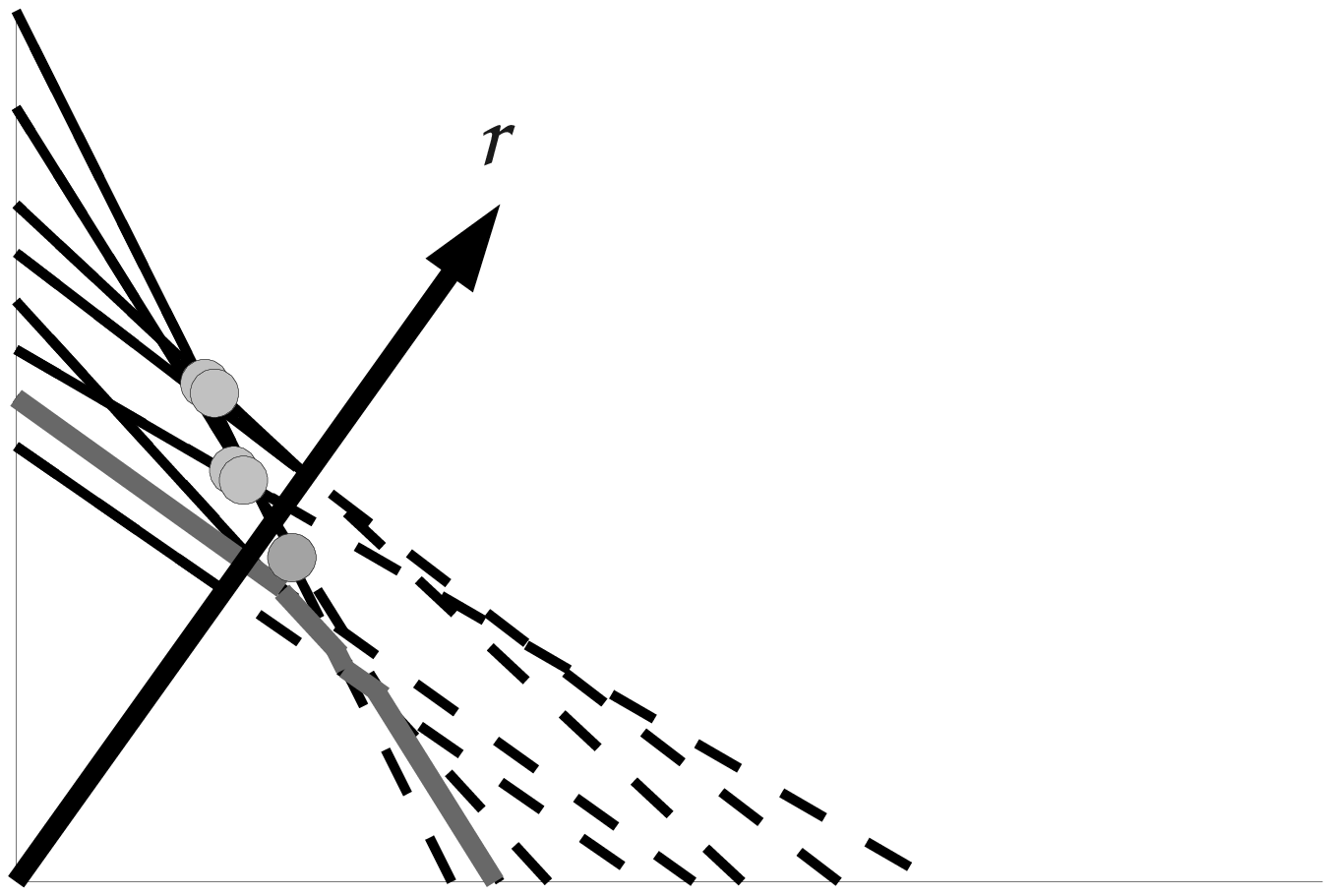}
 }
\hspace{.5cm}
 \raisebox{2.5cm}{\subtable[The priority queue, including all intersection points between $r$ and the $x$-axis which have been discovered between the $y$-axis and $r$.  The grayed entry is the one being processed.  The bolded entry is the one that is newly added by processing this point.]{
 \begin{tabular}{l || r}
\multicolumn{2}{c}{$\pq$}\\\hline
\textit{Lines}&\textit{slope of $r$}\\\hline
{\color{grayish}(Stoudemire, Randolph)}&{\color{grayish}2.64}\\
\textbf{(Nowitzki, Randolph)}&\textbf{2.48}\\
(Nowitzki, Stoudemire)&2.22\\
(Wade, Anthony)&2.07\\
(James, Nowitzki)&0.74\\
(Wade, Bryant)&0.62\\
\hline
 \end{tabular}
 }}
\hspace{.5cm}
 \raisebox{2cm}{\subtable[The best cost values for each line as of the processing of the darker point in (a).  The bold indicates the sustained value in the table after processing the point.]{
 \begin{tabular}{l || l | l}
\multicolumn{3}{c}{$\as$}\\\hline
\textit{Line}&$m=1$&$m=2$\\\hline
Durant&0.000&0.000\\
James&0.00&0.000\\
Wade&0.057&0.057\\
Nowitzki&0.062&0.059 (Wade)\\
Bryant&0.080&0.080\\
Anthony&0.089&0.084 (Bryant)\\
Stoudemire&0.105&0.066 (Wade)\\
Randolph&0.188&\textbf{0.083 (Wade)}\\\hline
 \end{tabular}
 }}
 \caption{\label{fig:alg2d_prog} The processing of the tenth event point, the intersection of the lines corresponding to Stoudemire and Randolph.  The distance along $r$ of the intersection point from the contour is $0.034$.  In this case, the newly discovered length-$2$ chain, (Stoudemire, Randolph), has cost $max(0.105, 0.034)$, which does not improve on the value already found for Randolph, $0.083$.}
\end{figure*}

\subsubsection*{$\as$}
Again, consider an intersection point $p_{i,j}$ featuring lines $l_i$ and $l_j$.  Let $l_i$ be farther from the origin than $l_j$ in the direction of a ray {\em after} $r$.  There are three valid paths through $p_{i,j}$, as illustrated in Figure~\ref{fig:alg2d_inset}.  

First consider the line, $l_i$, that emerges above after $p_{i,j}$ (line $l_2$ in Figure~\ref{fig:alg2d_inset}).  Consider also the row of $\regretsize$ cells of $\as$ that describe best paths for $l_i$.  The $h$'th such cell, describes the chain with optimum distance to $\contour_k$ that uses at most $h-1$ turns and emerges from $p_{i,j}$ along line $l_i$.  Because the turn $(l_j,l_i)$ is invalid, paths for $l_i$ cannot change, only their costs.  The cost is updated to the larger of what the value was before and the distance from $p_{i,j}$ to $\contour_k$ in the direction of $r$.
  
So, considering first a chain that leaves along $l_j$, the best possible route to the next intersection point of $l_j$ is exactly whatever was the best possible route along $l_j$ to $p_{i,j}$.  The cost of that route is the larger of the distance from $p_{i,j}$ to $\contour_k$ and the cost of the best possible route to the previous intersection point of $l_j$.  This value is updated for each of the $m$ cells in row $i$.

On the other hand, for a chain that emerges along $l_j$, there are two possibilities, depending on the best route to get to $p_{i,j}$.  Specifically, the best route of $h$ turns to $p_{i,j}$ is the cheaper of the best route to the previous intersection point along $l_j$ that used $h$ turns and the best route to the previous intersection point along $l_i$ that used $h-1$ turns.  The final cost is then the larger of the distance from $p_{i,j}$ to $\contour_k$ and the minimum cost to $p_{i,j}$ as just described.

\subsection{Algorithm Description}

As we have hinted, the algorithm is a plane sweep through the arrangement of lines, searching for the minimal cost convex chain with fewer than $\regretsize$ turns.  The sweep features a ray $r$, originally positioned on the positive $y$-axis, moving radially through the positive quadrant to the positive $x$-axis.  This plane sweep approach is appropriate as a consequence of Lemma~\ref{thm:maxDistAtVertex}, which reveals that the cost of any convex chain is maximized at an event point.

To initialize the algorithm, the intersection point of every line with the positive $y$-axis is processed, populating the data structures as in Figure~\ref{fig:alg2d_init}.  We only add to the priority queue the intersection points of lines that are immediate neighbours with respect to a sort on $y$-intercept, and for which the intersection point occurs in the positive quadrant.  These points are maintained in the queue in descending order of angle from the positive $x$-axis (i.e., in the order in which the ray $r$ will ``sweep'' them).  The array $\as$ is initialised with empty paths for every cell, with a cost set to the distance of the relevant line to the contour.

The algorithm proceeds simply by popping the next event from $\pq$, updating the data structures as per Section~\ref{sec:transitions}, and pushing the new event points onto the queue.  For the running basketball example, this is illustrated in Figure~\ref{fig:alg2d_prog}.  For event points that correspond to vertices of the contour (since these, too, are intersection points of lines that will eventually be discovered by the plane sweep), we update every cell of $\as$ with new maximum costs for each line that has become more distant from $\contour_k$.

\begin{figure*}[tbh]
 \centering
 \subfigure[The arrangement of lines after processing every event point, emptying $\queue$.]{
  \includegraphics[scale=.35, clip=true, trim=0 14.5cm 5cm 0]{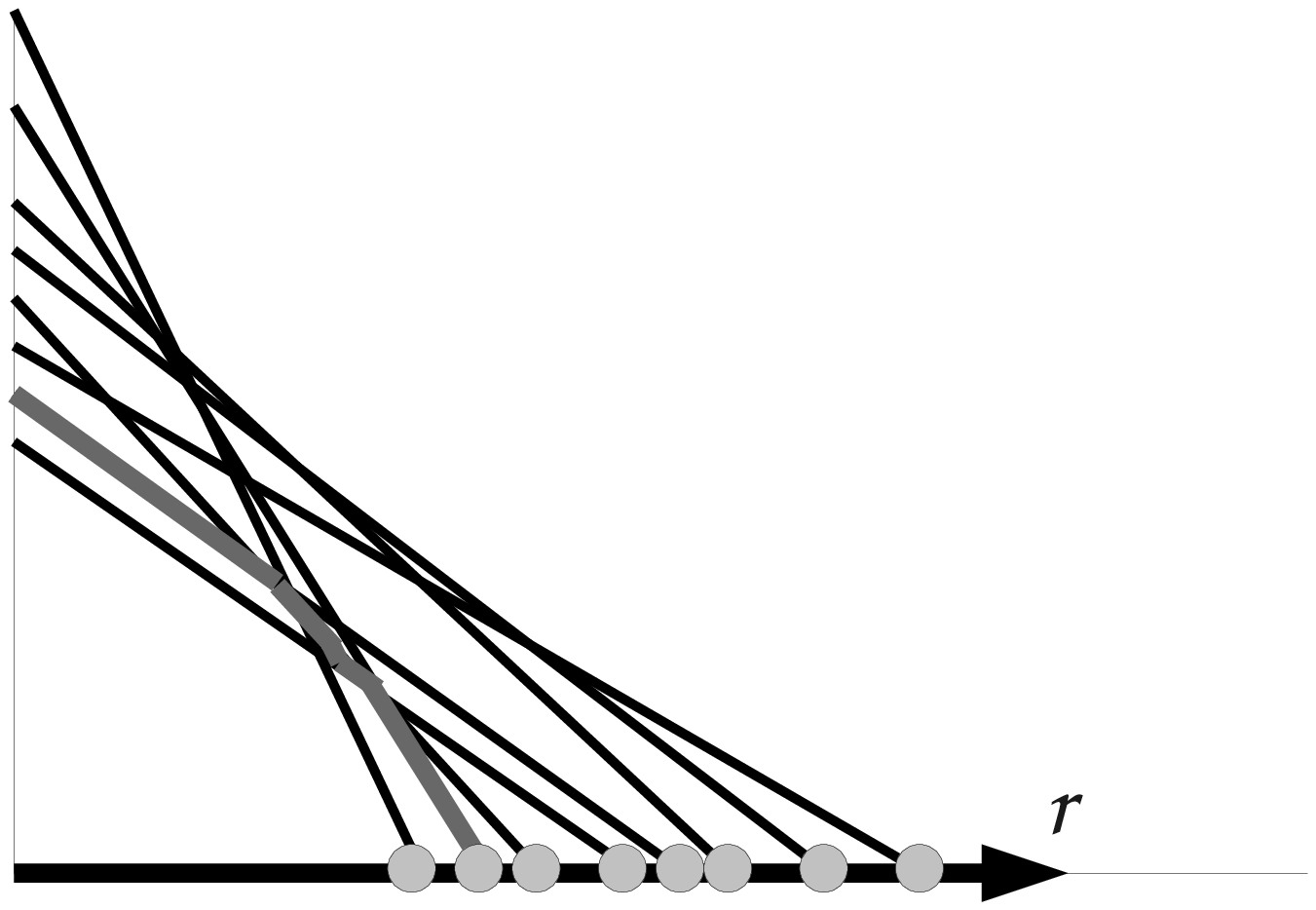}
 }
\hspace{1cm}
 \raisebox{2cm}{\subtable[The best cost values for each line at the conclusion of the plane sweep.  The bold entries represent optimal solutions for each $j\leq m$.]{
 \begin{tabular}{l || l | l}
\multicolumn{3}{c}{$\as$}\\\hline
\textit{Line}&$m=1$&$m=2$\\\hline
Durant&0.114&0.114\\
James&0.208&0.208\\
Wade&0.625&0.625\\
Nowitzki&0.117&0.059 (Wade)\\
Bryant&0.566&0.566\\
Anthony&0.397&0.081 (Wade)\\
Stoudemire&\textbf{0.105}&\textbf{0.000 (Durant)}\\
Randolph&0.188&\textbf{0.000 (James)}\\\hline
 \end{tabular}
 }}
 \caption{\label{fig:alg2d_concl} The data structures at the conclusion of the plane sweep, when both $r$ reaches the $x$-axis and $\pq$ is empty.  Each cell $(h, i)$ of $\as$ contains the optimum solution that contains $i$ or fewer lines and ends with line $l_h$.  The minimum value in the entire table is the optimal solution for this two-dimensional $j$-Regret problem. }
\end{figure*}

Eventually, $\pq$ will be exhausted as $r$ reaches the positive $x$-axis (see Figure~\ref{fig:alg2d_concl}).  Every cell is updated with new maximum costs at this last contour vertex (the intersection of the contour with the $x$-axis).  The final step is to scan through all of $\as$ and determine the smallest cost.  This is the optimal solution, which is reported along with the path used to obtain it.

Algorithm~\ref{alg:2d} describes the algorithm with greater detail.

\subsection{Asymptotic Complexity}\label{sec:2d_complexity}

\begin{theorem}
Algorithm~\ref{alg:2d} for the two-dimensional case finds an optimal $k$-regret minimizing set of order $\regretsize$ in $\order({\regretsize}n^2)$ time with $\order(n^2)$ space.
\end{theorem}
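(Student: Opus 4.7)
The proof splits naturally into three pieces: correctness of the sweep-DP, the time bound, and the space bound. For correctness, I would start from Lemma~\ref{thm:chains_as_solns}, which tells us the optimal $k$-regret minimizing set corresponds to a convex chain with at most $\regretsize-1$ turns through $\mathcal{A}_{\ls}$ minimizing the ratio of the distance to $\contour_k$ over the distance to $\origin$. Since $r$ sweeps radially and Lemma~\ref{thm:maxDistAtVertex} says the worst such ratio is attained at a vertex of $\cchain$ or of $\contour_k$, it suffices to certify the cost only at event points. I would then argue by induction on the sequence of processed events that after handling the event $p_{i,j}$, cell $\as[l,h]$ records the minimum, over all convex chains using at most $h$ lines and ending on $l$ between the $y$-axis and the current sweep position, of the maximum ratio witnessed so far. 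The induction step is exactly the three-path update described for $\as$: the only way a new candidate chain arises at $p_{i,j}$ is either by continuing along $l_i$ or $l_j$, or by turning from $l_i$ onto $l_j$ (the concave direction being forbidden by Lemma~\ref{thm:chains_as_solns}). Contour vertices are folded in by refreshing the max-ratio term. After $r$ reaches the $x$-axis, the minimum entry of $\as$ is, by the inductive invariant, the global optimum.

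For the time bound, I would count events: the arrangement $\mathcal{A}_{\ls}$ has at most $\binom{n}{2}=\order(n^2)$ pairwise intersections, and the contour contributes $\order(n)$ additional vertices, so at most $\order(n^2)$ events are processed. At each event, the work consists of $\order(1)$ swaps and neighbour re-computations in $\ls$, $\order(1)$ insertions into $\pq$, and an update of a single row of $\as$ across its $\regretsize$ columns, which is $\order(\regretsize)$. Summing gives $\order(\regretsize n^2)$; the priority queue operations can be absorbed either by maintaining $\pq$ as a bucketed structure over the $\order(n^2)$ precomputable event slopes so that insertion and extract-min are $\order(1)$ amortized, or by noting that $\order(n^2 \log n)$ priority queue work is dominated once $\regretsize \geq \log n$ (and for $\regretsize < \log n$ one replaces $\pq$ with such a bucketed queue). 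For space, $\as$ is $n \times \regretsize$ so $\order(n\regretsize) = \order(n^2)$, and $\pq$ stores at most $\order(n^2)$ yet-to-be-processed intersections, while $\ls$ uses $\order(n)$, giving the stated $\order(n^2)$ total.

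The main obstacle will be making the $\as$-update invariant airtight: one must verify that no optimal chain is lost when $l_i$ and $l_j$ swap past each other, in particular that the forbidden $(l_j, l_i)$ ``concave'' transition is genuinely never needed and that the turn $(l_i, l_j)$ is picked up at the correct event (the first intersection after they become adjacent). I would handle this by a case analysis tied to the relative position of $l_i, l_j$ before and after $p_{i,j}$ with respect to $\origin$, using convexity of the lower envelope to rule out any missed transition. A secondary subtlety is interleaving contour vertices with line--line events so that the max-ratio entry stored in $\as$ is always correct at the moment the sweep crosses $p_{i,j}$; this is settled simply by processing contour vertices in the same radial ordering through $\pq$.
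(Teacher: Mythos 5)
Your proposal is correct and, on the complexity side, follows essentially the same route as the paper: count the $\order(n^2)$ pairwise-intersection events plus the $\order(n)$ contour vertices, charge $\order(\regretsize)$ per ordinary event for updating the relevant rows of $\as$, and observe that $\as$ ($n\times\regretsize$), $\queue$ (at most $\order(n^2)$ pending intersections) and $\ls$ together give $\order(n^2)$ space. Two places where you go beyond the paper's own argument are worth noting. First, the paper's proof addresses \emph{only} the time and space bounds and says nothing about why the returned chain is optimal; your inductive invariant on the cells of $\as$ (each cell $(l,h)$ holding the best max-ratio over convex chains with at most $h$ lines ending on $l$ up to the current sweep position, with the concave transition excluded and contour vertices folded into the max term) is exactly the missing correctness component, and you correctly flag the delicate step --- showing no optimal chain is lost when two lines swap --- as the part that needs the careful convexity case analysis. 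Second, you raise the cost of priority-queue operations, which the paper's accounting silently ignores: a comparison-based $\queue$ over $\order(n^2)$ events costs $\order(n^2\log n)$, which is not dominated by $\order(\regretsize n^2)$ when $\regretsize$ is small, so your bucketed-queue fix (or the observation that the bound holds as stated only for $\regretsize=\Omega(\log n)$ otherwise) is a genuine repair rather than pedantry. One minor discrepancy: at contour-vertex events the algorithm as written refreshes \emph{all} $n\regretsize$ cells of $\as$, not a single row, so those $\order(n)$ events cost $\order(n\regretsize)$ each; this still totals $\order(n^2\regretsize)$, matching the paper's accounting, but your per-event charge should be stated separately for the two event types.
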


\begin{proof}
First consider space used.  
The size of the contour is bounded by $n$.  Of the three data structures, $\ls$ is of size exactly $n$, $\as$ is of size exactly $n*\regretsize\leq n^2$, and $\pq$ is proportional to the largest number of intersection points that have been discovered but not processed, clearly less than $n*(n-1)$.  Therefore, the total space is $\order(n^2)$.

Regarding time, 
for each non-contour event point, of which there may be up to $n^2$, up to $2\regretsize$ cells of $\as$ are updated.  For each contour event point, of which there are $|\contour_k|\leq n$, each of the $n\regretsize$ cells of $\as$ could potentially be updated.  The initialisation requires a sort of $n$ lines and then an initialisation of up to $n-1$ insertions into $\pq$ and $n\regretsize$ values for $\as$, which can be computed in constant time.  At the conclusion of the plane sweep, all $n\regretsize$ cells of $\as$ must be scanned.  Therefore, the entire procedure is $\order(n^2\regretsize)$.

\end{proof}


\section{An algorithm for general dimension}\label{sec:allD}
In Section~\ref{sec:algo2d}, we gave an efficient plane sweep, dynamic programming algorithm to find optimal $k$-regret minimizing sets in two dimensions.  Unfortunately, plane sweep algorithms do not readily generalize to higher dimensions.  So, in this section, we offer a greedy algorithm which exploits Lemma~\ref{thm:greedy} in order to progress towards an optimal solution.

\medskip
\begin{lemma}\label{thm:greedy}
Consider the envelope produced by some set $\ls$.  For another set $\ls'$ to better minimize regret, it is necessary that some line of $\ls'\setminus\ls$ either passes through the area between the contour and the envelope or remains entirely under the contour at the angle for which the distance ratio for $\ss$ is maximized.
\end{lemma}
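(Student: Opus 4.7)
My plan is to argue directly by extracting the specific line of $\ls'\setminus\ls$ that is responsible for reducing the $k$-regret at the direction where $\ls$ is worst, and then classifying where that line sits relative to $\contour_k$. Throughout, I work in the dual picture and translate statements about regret back to distances via Lemma~\ref{thm:regret_is_distance}.

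Concretely, let $M$ be the maximum $k$-regret ratio attained by (the primal set corresponding to) $\ls$, and let $\vec{f^*}$ be any direction at which it is attained; by Lemma~\ref{thm:maxDistAtVertex} this may be taken to be a vertex of either $\env$ or $\contour_k$. Assume for the hypothesis of the lemma that the ratio for $\ls'$ is strictly smaller than $M$; in particular this holds in the direction $\vec{f^*}$. By Lemma~\ref{thm:regret_is_distance} together with the clamping to zero in Definition~\ref{def:kregret}, at a fixed direction the $k$-regret ratio is a strictly decreasing function of the envelope's distance from $\origin$ while the envelope lies above $\contour_k$, and is identically zero once the envelope sits at or below $\contour_k$. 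Therefore the lower envelope $\env'$ of $\ls'$ must lie strictly closer to $\origin$ than $\env$ in the direction $\vec{f^*}$. Lemma~\ref{thm:envelope_best_of_ss} then supplies a specific line $\ell^*\in\ls'$ realizing $\env'$ at $\vec{f^*}$, and I can exclude $\ell^*\in\ls$: otherwise $\ell^*$ would be an element of $\ls$ strictly closer to $\origin$ at $\vec{f^*}$ than the lower envelope of $\ls$ itself, which is impossible. Hence $\ell^*\in\ls'\setminus\ls$.

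It then remains to locate $\ell^*$ relative to $\contour_k$ at $\vec{f^*}$. If $\ell^*$ lies strictly above $\contour_k$ in that direction, it is trapped in the open strip between $\env$ and $\contour_k$, matching the first clause of the lemma. Otherwise $\ell^*$ sits at or below $\contour_k$ at $\vec{f^*}$; if in addition $\ell^*$ stays under $\contour_k$ in every direction, the second clause holds directly. The real geometric content, and the step I expect to be the main obstacle, is the intermediate subcase in which $\ell^*$ passes below $\contour_k$ near $\vec{f^*}$ but rises above $\contour_k$ in some other direction. Here I plan to use the linearity of $\ell^*$ together with the convex piecewise-linear shape of $\contour_k$ in the positive quadrant: in a small arc of directions around any crossing angle, $\ell^*$ necessarily lies above $\contour_k$ while still remaining below $\env$, so it enters the strip between the two, and the first clause is again satisfied. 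Combining these cases, some line of $\ls'\setminus\ls$ realizes one of the two alternatives, which is exactly the statement to be proved.
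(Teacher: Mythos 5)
Your proposal is correct and follows essentially the same route as the paper's (much terser) proof: both arguments reduce to the observation that, at the angle $\vec{f^*}$ where the distance ratio of $\ls$ is maximized, Corollary~\ref{thm:dist_with_fixed_uf} forces the lower envelope of any strictly better $\ls'$ to lie strictly closer to $\origin$ than $\env$, and the line realizing it there cannot belong to $\ls$. Your final subcase (a line dipping below $\contour_k$ at $\vec{f^*}$ but rising above it elsewhere) is not needed under the statement's intended reading, in which both alternatives are evaluated only at the maximizing angle, so the unjustified claim there that $\ell^*$ stays below $\env$ near the crossing can simply be dropped.
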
  
\begin{proof}
This results from Corollary~\ref{thm:dist_with_fixed_uf}, which indicates that for that fixed utility function, distance is directly proportional to the regret ratio.  The $L_2$ distance metric is transitive.  So, if no line in $\ls'$ intersects the area underneath the lower envelope of $\ls$ at the angle for which the distance ratio is maximized, then its maximum distance is clearly larger.
\end{proof}
\medskip

We use the insight of Lemma~\ref{thm:greedy} to design a greedy algorithm which behaves as follows.  Consider a given set of lines $\ss$ for which the distance ratio is maximized at point $p$.  Consider also a new line $l\not\in\ss$.  We advance to a new set $\ss'$ exactly when $l$ passes between the origin and $p$ and then if there is some element $l'\in\ss$ such that the maximum distance ratio of $\ss\setminus\{l'\}\cup\{l\}$ to $\contour_k$ is less than that at $p$.  That is to say, if $l$ intersects the segment $[\origin,p]$, then we look for a new, better set that can be obtained by replacing some element of $\ss$ with $l$.  We know from Lemma~\ref{thm:greedy} that the intersection test is a necessary condition to find some improved solution $\ss'$.

Overall, the algorithm begins with an initial seed solution and cycles through all lines repeatedly, conducting the above test in order to refine $\ss$, until no line can improve the cost any more.  We report this solution.  The greedy algorithm is feasible because, given the contour insight that we have developed in this paper, one can efficiently determine which of two sets is more optimal: it is the one for which the maximum distance of the lower envelope of the set from $\contour_k$ is minimized.  The algorithm is detailed in Algorithm~\ref{alg:allD}.

Note that the algorithm is guaranteed to terminate because it will always progress towards a better solution due to Line~13, and there are finitely many subsets of $\ls$.

\newpage
\section{Related Work}\label{sec:lit}
The idea of representing an entire dataset with a few representative tuples for multi-criteria decision making has drawn much research attention in the past decade, since the introduce of the Skyline operator by B\"{o}rzs\"{o}nyi\ea~\cite{skyline}.  However, its susceptibility to the curse of dimensionality is well-known.  Chan\ea~\cite{highsky} made a compelling case for this, demonstrating that on the NBA basketball dataset (as it was at the time), more than $1$ in $20$ tuples appear in the skyline in high dimensions.  Consequently, there have been numerous efforts to derive a smaller cardinality representative subset (e.g.,~\cite{kdomsky,personalizedSky,topkdom}), especially one that presents very distinct tuples (e.g.,~\cite{repskies,dbasedSky}).

Regret and regret minimizing sets are relatively new in the lineage of these efforts.  When introduced by \nanon~\cite{regretMin}, the emphasis was on proving that the maximum regret ratio is bounded by:
$$\frac{d-1}{(\regretsize -d +1)^{d-1}+d-1}.$$ 

Naturally, this bound holds for our generalisation introduced in this paper.  As far as we know, no research has yet concerned itself with finding optimal regret minimizing sets.  

The top-$k$ query off which {\em regret} is based is well studied.  The pareto-dominance graph~\cite{pdg} uses ideas of pareto-optimality to index for top-$k$ queries.  The Onion Technique~\cite{onionTechnique} is a depth-based approach, but suffers the same curse of dimensionality as the skyline.  Ilyas offers a nice survey on the topic of top-$k$ queries~\cite{ilyas}.  Duality transforms are pervasive in this research area (e.g.,~\cite{das,tsaparas}).  We use the duality transform of Chester\ea~\cite{mrtop,vecProj} because of the immediate results on top-$k$ depth contours it provides in answering the reverse top-$k$ queries of Vlachou\ea~\cite{rtopVlachou,vlachouConf}.

Using duality transforms on data points casts the problem into the context of arrangements.  In two dimensions, plane sweep algorithms~\cite{topSweep} are a typical approach to solving problems on arrangements of lines.  Agarwal\ea~\cite{chains} give bounds on the the number of edges and vertices that can exist at each level (or depth) of an arrangement.  Top-$k$ depth contours are not the only notion of contours or depth in arrangements of lines: Hugg\ea~\cite{tufts} evaluate several depth measures and Zuo\ea~\cite{zuo} derive general stastical results that apply to many of them and could be useful in extending this work.  Rousseuw and Hubert consider depth in arrangements for dimensions greater than two~\cite{arrangeDepth}, which could present deeper insight into the greey algorithm presented here.

\section{Conclusion}\label{sec:conclusion}
Regret minimizing sets are a nice alternative to skyline as a succinct representative subset of a dataset, but suffer from a very strict assumption that users expect to see their top-$1$ choice for their queries.  We generalised the notion to that of {\em $k$-regret minimizing sets}, which evaluates how representative a subset of a dataset is not by how closely it approximates every users' top-$1$ choice, but their top-$k$ choice.  We showed that in dual space, the top-$k$ depth contour of a dataset is exactly the optimal $k$-regret minimizing set.  If the cardinality of the $k$-regret minimizing set is specified as an input parameter, then the convex chain that minimizes the ratio of distances from itself to the contour and the contour to the origin is precisely the optimal solution.  We used these ideas to construct an $\order(n^2\regretsize)$ optimal algorithm for two dimensions and a greedy algorithm for general dimension.   


\balance

\bibliographystyle{abbrv}

\newpage
\begin{appendix}
\section{Algorithm Pseudocode}
\nobalance

\begin{algorithm}[htbf]
 \caption{\label{alg:2d}Calculating an optimal $k$-max-regret minimizing set $\ss$ from $\ds\subseteq\mathbb{R}^2$ with $|\ss|\leq \regretsize$}
 \begin{algorithmic}[1]
  \STATE \textbf{Input}: $\contour_k$; $\regretsize$; $\ls$, sorted by $y$-intercept
  \STATE \textbf{Output}: $\ss\subseteq\ls$, the lines that together form an optimal solution $\ss$ with $|\ss|\leq \regretsize$
  \IF{$|\contour_k|\leq \regretsize$}
   \STATE Return $\contour_k$
  \ENDIF
  \STATE Initialize $\queue$ as an empty priority queue; priority is angle of points, desc
  \FORALL{$l\in\ls$}
   \STATE Set $\as(l) = k*[(y\mathrm{-intercept, max}(y\mathrm{-intercept} - y\mathrm{-intercept of}\ \contour_k,0)]$
   \STATE Add to $\queue$ intersect($l$ and next $l$) if not last $l$
  \ENDFOR
  \WHILE{$\queue$ is not empty}
  \STATE Let $p$ be next point in $\queue$
  \STATE Let $\disttock$ be distance ratio of $p$ to $\contour_k$ and $\contour_k$ to $\origin$
   \IF{$p\in\contour_k$}
    \FORALL{($v\in\as$)}
     \STATE Let $\disttock'$ be distance of ratio of line to $\contour_k$ and $\contour_k$ to $\origin$
     \STATE Let $v$ be $max(v, \disttock')$
    \ENDFOR
   \ENDIF
   \STATE Retrieve adjacent $l_{i}, l_{i+1}$ that intersect at $p$
   \STATE Add intersect($l_{i-1}, l_{i+1}$) if angle less than that of $p$
   \STATE Add intersect($l_{i}, l_{i+2}$) if angle less than that of $p$
   \FORALL{$j\in[0,\regretsize)$}
    \STATE Let $\as(l_{i})_j = \max(\as(l_{i})_j, \disttock)$
    \IF{$j\geq 0$}
     \IF{$\as(l_{i+1})_{j-1} < \as(l_{i+1})_j$}
      \STATE Add $p$ to path of $\as(l_{i+1})_j$
      \STATE Let $\as(l_{i+1})_j = \max(\as(l_{i+1})_{j-1}, \disttock)$
     \ELSE
      \STATE Let $\as(l_{i+1})_j = \max(\as(l_{i+1})_j, \disttock)$
     \ENDIF
    \ELSE
     \STATE Let $\as(l_{i+1})_j = \max(\as(l_{i+1})_j, \disttock)$
    \ENDIF
   \ENDFOR
   \STATE Swap $l_{i}$ and $l_{i+1}$
  \ENDWHILE
  \FORALL{$l\in\ls, j\in[0,\regretsize)$}
   \STATE Remember $v=\as(l)_j$ if smallest yet seen, breaking ties with smaller $j$
  \ENDFOR
  \STATE RETURN set of lines generating path corresponding to $v$
 \end{algorithmic}
\end{algorithm}

\begin{algorithm}[h]
 \caption{\label{alg:allD}Greedy algorithm to compute $k$-max-regret minimizing set $\ss$ from $\ds\subseteq\mathbb{R}^d$ with $|\ss|\leq m$}
 \begin{algorithmic}[1]
  \STATE \textbf{Input}: $\contour_k$; $m$; $\ls$
  \STATE \textbf{Output}: $\ss\subseteq\ls$, the lines that together form a solution $\ss$ with $|\ss|\leq m$
  \IF{$|\contour_k|\leq m$}
   \STATE Return $\contour_k$
  \ENDIF
  \STATE Select an arbitrary set $\ss\subseteq\ls$, such that $|\ss|=m$
  \STATE Let $p$ be point at which distance ratio from lower envelope of $\ss$ to $\contour_k$ is maximized
  \STATE Place all lines $l_i\not\in\ss$ into unsorted queue, $\queue$.
  \WHILE{$\queue$ is not empty}
   \STATE Let $l$ be next line in $\queue$
   \IF{$l$ intersects $[\origin,p]$}
    \FORALL{($l'\in\ss$)}
     \IF{Distance ratio of $\ss\setminus\{l'\}\cup\{l\}$ to $\contour_k <$ distance ratio of $\ss$ to $\contour_k$}
      \STATE Let $\ss$ be $\ss\setminus\{l'\}\cup\{l\}$
      \STATE Restore to $\queue$ all $l_i\not\in\ss$
      \STATE Break
     \ENDIF
    \ENDFOR
   \ENDIF
  \ENDWHILE
  \STATE RETURN $\ss$
 \end{algorithmic}
\end{algorithm}

\end{appendix}

\end{document}